\pdfoutput=1
\documentclass[12pt]{article}
\usepackage[left=1.0in,right=1.0in,top=1.0in,bottom=1.0in]{geometry}
\usepackage{setspace}
\usepackage{fancyhdr}
\usepackage{lmodern}
\usepackage{microtype}
\usepackage[normalem]{ulem}
\usepackage[T1]{fontenc}

\usepackage{titlesec}
\usepackage[title]{appendix}
\usepackage{titling}
\pretitle{\begin{center}\large\bfseries}
\posttitle{\end{center}}
\preauthor{\begin{center}\normalsize}
\postauthor{\end{center}}
\predate{\begin{center}\normalsize}
\postdate{\end{center}}

\usepackage{amssymb, amsmath, amsfonts, mathtools}
\usepackage{dsfont}
\usepackage{amsthm}

\newtheorem{proposition}{Proposition}

\newtheorem{assumption}{Assumption}

\theoremstyle{remark}
\newtheorem{remark}{Remark}
\theoremstyle{plain}

\newcommand{\E}{\mathbb{E}}

\newcommand{\N}{\mathcal{N}}

\newcommand{\1}[1]{\mathds{1}\left[#1\right]}

\usepackage{array, booktabs, makecell, cellspace}
\usepackage{siunitx}
\usepackage[flushleft]{threeparttable}
\usepackage{rotating, tabularx}

\usepackage{graphicx, subcaption}
\graphicspath{{figures/}}
\usepackage{pdflscape, tikz}
\usetikzlibrary{decorations.pathreplacing, arrows.meta, positioning}
\usepackage{caption}
\usepackage{float}

\usepackage{enumitem}

\usepackage{xurl}
\usepackage{xcolor}
\definecolor{citationcolor}{RGB}{0, 127, 255}

\usepackage[authoryear,round]{natbib}
\setcitestyle{aysep={,}}
\newcolumntype{L}[1]{>{\raggedright\let\newline\\\arraybackslash\hspace{0pt}}m{#1}}
\newcolumntype{C}[1]{>{\centering\let\newline\\\arraybackslash\hspace{0pt}}m{#1}}
\newcolumntype{R}[1]{>{\raggedleft\let\newline\\\arraybackslash\hspace{0pt}}m{#1}}

\usepackage[hidelinks, breaklinks, colorlinks=true,
            linkcolor=citationcolor, citecolor=citationcolor,
            urlcolor=citationcolor]{hyperref}

\title{Agentic Delegation and the Language Frontier of Software Developers:
A Model and Evidence from Claude Code on GitHub\thanks{Preliminary draft, July 2026.
Comments welcome. An earlier version circulated as ``Coding Beyond Your
Training: Claude Code and the Technological Frontier of Software
Developers.'' This research was made possible by Alexander Quispe's time
as a research associate at Microsoft--GitHub during summer 2026.}}

\author{Alexander Quispe\thanks{Department of Computing and Mathematical
Sciences, California Institute of Technology; Microsoft--GitHub.
Email: aquisper@caltech.edu.}
\qquad
Kevin Xu\thanks{GitHub. Email: khxu@github.com.}}

\date{\today}

\begin{document}

\maketitle
\thispagestyle{empty}

\begin{abstract}
\noindent \singlespacing
We develop and test a model of agentic delegation in software production.
Developers face language-specific entry thresholds; conversational AI mainly
augments work in languages they already know, while agentic AI adds delegated
execution under developer specification and verification. The model predicts
an activation band of unfamiliar languages that become feasible only with an
agent, expanding the developer's observed language-production frontier. We
test this prediction in a monthly GitHub panel of 5{,}346 developers, dating
adoption by first Claude Code co-authorship and constructing commit-level
language outcomes from 57 million changed files. Doubly robust
staggered-adoption event studies with not-yet-treated comparisons show sharp
expansion at adoption: active languages rise by 2.5 relative to a 0.9 baseline,
newly used languages by 1.2, entropy by 0.38, and cumulative breadth continues
to grow afterward. The pattern survives removing the treatment-defining
language, excluding all Claude-coauthored commits, conditioning on activity,
and screening users of competing agents. Consistent with the model, first uses
of unfamiliar languages concentrate among narrow pre-adoption specialists at
each activity level. Because adoption is voluntary and may coincide with
project shocks, the estimates are event-time associations rather than
definitive causal effects.
\end{abstract}

\vspace{1em}
\noindent \textbf{JEL Codes:} O33, J24, L86

\vspace{0.5em}
\noindent \textbf{Keywords:} agentic AI; Claude Code; programming languages;
language portfolio; delegation; difference-in-differences; staggered
adoption; open-source software

\newpage
\setcounter{page}{1}

\section{Introduction}
\label{sec:introduction}

A skilled Python developer rarely writes production Rust. A data scientist
trained on R and SQL rarely ships an iOS application in Swift. The boundaries
of an individual's programming-language portfolio are sticky, set by
human-capital investments made during training and early career, and costly to
cross even when the underlying problem-solving ability would transfer. These
boundaries matter beyond the individual: they prevent skilled people from
contributing to projects that need their general abilities but happen to be
coded in unfamiliar languages.

This paper studies whether agentic coding assistants expand the observed
programming-language portfolios of individual developers. Not all AI
assistants work through the same margin. A first generation of conversational
and co-pilot tools supplies suggestions, explanations, snippets, and debugging
advice that the developer must read, adapt, and integrate --- assistance that
is most valuable where she already has a language-specific foothold. Claude
Code represents a second generation: an agentic assistant that can inspect a
repository, edit files, run commands, debug failures, and iterate from a
natural-language specification. In the model we develop, this adds a
\emph{delegation} mode to the developer's production menu: the agent executes
part of the language-specific work while the developer specifies, decomposes,
and verifies. Delegation lowers the entry threshold for some
unfamiliar-language opportunities, creating an \emph{activation band} of
languages that would not have been used under solo or conversational
production but become feasible with an agent. This yields predictions for the
monthly language count, newly-used languages, and cumulative language breadth,
which we test in a monthly GitHub panel around first detectable Claude Code
use.

The frontier we measure deserves emphasis, because it defines the claim. It is
a \emph{production} frontier, not a skill frontier: the set of languages in
which a developer ships working code, under whichever production mode she
selects. We do not claim, and the model does not predict, that a Python
developer learns to write Rust unassisted; we claim that she can now
\emph{deliver} a Rust project by directing and verifying an agent --- and that
this margin, new with the second generation of tools, is economically real.
Delegated output is not autonomous machine output: its value in the model is
zero for a developer who cannot specify and verify, and empirically the
developer selects the project, steers the work, and ships the result.

Measurement matches the question. Because GitHub attaches no language to a
commit, we rebuild each sampled developer's contribution history at the commit
level: we enumerate 3.2 million commits across 133{,}000 developer-repository
pairs --- discovered from the contribution API, the raw Claude-commit stream,
and the GH~Archive event record, which adds 45 percent more repositories than
the contribution API alone --- and extract the 57 million files they changed,
assigning each file a language with GitHub's Linguist rules and keeping only
programming languages. Treatment is dated by the developer's first
Claude-co-authored commit, detected through the machine-readable co-author
trailer, from a universe of 7.8 million such commits. The analysis sample
contains 5{,}346 developers observed over 28 months: early adopters
(April--September 2025) and later adopters (October 2025--January 2026) who
serve as not-yet-treated comparisons before their own adoption, screened for
automation accounts and for detectable use of competing agentic tools before
adoption. We estimate group-time average treatment effects with the doubly
robust \citet{callaway2021difference} estimator, allowing one month of
anticipation because the trailer-based date is a slightly late measure of true
first use.

Around adoption, language portfolios expand sharply. The number of distinct
programming languages a developer works in rises by 2.5 in the adoption month,
against a pre-adoption mean of 0.9, and remains significantly elevated for
months. Newly-used languages --- languages absent from the developer's entire
observed history, the direct counterpart of the activation band --- spike by
1.2 at adoption, roughly four times the baseline flow. Language entropy, which
requires the new languages to carry non-trivial shares of the work, rises by
0.38. The cumulative stock of languages ever used grows by 1.6 at adoption and
keeps growing through the post-adoption window, the stock-flow pattern the
model predicts; we treat this outcome as descriptive, however, because it is
the one outcome with non-trivial pre-period coefficients, indicating that
early adopters were already accumulating languages somewhat faster before
adoption. Repository counts and commit volume also jump at adoption; we report
them as diagnostics of engagement, not as frontier outcomes.

The central measurement concern is mechanical: adoption is dated by a commit,
and commits carry languages, so the adoption-month estimate could in principle
reflect the treatment-defining commit itself, amplified by the surge in
activity that accompanies adoption. Four families of checks address this
directly. Excluding, for every developer and every month, the language her
first Claude commit introduced leaves the effects significant at all horizons.
Excluding \emph{all} Claude-co-authored commits from outcome construction ---
so that outcomes are built only from the developer's unassisted work ---
preserves roughly two-thirds of the language-count effect, which remains
significant for months; the expansion is not an artifact of counting the
agent's output. Conditioning on pre-adoption activity leaves the estimates
unchanged, while per-commit rate outcomes show that the persistent component
of the expansion operates through a larger volume of work spanning more
languages rather than through each commit becoming more polyglot. Stricter
activity filters and alternative timing assumptions do not change the picture.

The model's sharpest cross-sectional prediction also finds support. Expected
expansion into unfamiliar languages is the product of headroom --- the stock
of unfamiliar candidates --- and the ability to specify and verify. Double
sorting developers on pre-adoption activity and pre-adoption portfolio
breadth, first uses of unfamiliar languages concentrate among
\emph{specialists} at every ability level: developers with narrow pre-adoption
portfolios begin two to three times more new languages at adoption than
generalists of the same activity level. A uniform activity shock would not
produce this pattern.

We are explicit about the central identification threat. Claude adoption is
voluntary, and its timing is plausibly correlated with the decision to start a
project in an unfamiliar language: a developer who decides to ship a Rust
application may install Claude \emph{because} the language is unfamiliar,
making her first Claude commit mechanically contemporaneous with her first
Rust commit. The staggered design addresses cohort heterogeneity and
negative-weighting concerns
\citep{goodmanbacon2021difference,dechaisemartin2020two}, and the outcome-side
checks above rule out the mechanical explanations, but no outcome-side check
resolves project-driven selection into adoption timing. We therefore interpret
the estimates as event-time associations around first detectable Claude Code
use --- sharp, robust, and quantitatively consistent with the
agentic-delegation mechanism --- rather than as settled causal effects, and we
outline the designs that could settle them.

The paper contributes to three literatures. To the empirical economics of AI
in knowledge work
\citep{peng2023impact,brynjolfsson2024generative,noy2023experimental,
cui2025effects,dellacqua2023jagged}, which has so far studied first-generation
completion and conversational tools, we bring the second generation:
commit-message co-authorship identifies \emph{agentic} production at scale in
observational data, over horizons long enough to see dynamics, with the
outcome shifted from the speed of a fixed task to the breadth of the
technological portfolio. To the literature on automation, augmentation, and
the task content of work
\citep{acemoglu2019automation,idetalamas2025,agrawal2018prediction}, we add an
individual-level extensive margin: a delegation technology that expands the
set of tasks a given worker can produce in, distinct from displacement and
reinstatement. To the user-innovation tradition
\citep{vonhippel2005democratizing,baldwin2011modeling,boudreau2013innovation},
our developer-level evidence speaks to how agentic tools restructure the
feasible set of open-source contributors.

The remainder of the paper is organized as follows. Section
\ref{sec:literature} positions the paper in the literature. Section
\ref{sec:background} documents the two-generation history of AI coding
assistance with primary sources. Section
\ref{sec:theory} develops the agentic-delegation model and its empirical
predictions. Section \ref{sec:data} describes the data sources, sample
construction, and outcome variables. Section \ref{sec:strategy} sets out the
empirical strategy. Section \ref{sec:results} reports the main results.
Section \ref{sec:robustness} documents robustness to mechanical
exposure, activity volume, and alternative sample restrictions. Section
\ref{sec:mechanism} tests the model's specialist-heterogeneity
prediction. Section \ref{sec:discussion} confronts the identification threats
and outlines a research agenda for cleaner causal tests. Section
\ref{sec:conclusion} concludes.

\section{Related Literature}
\label{sec:literature}

The paper connects three literatures: the theory of automation versus
augmentation, from which we take the two-generation distinction; models
of human capital, learning, and technology adoption, from which we take
the entry-barrier machinery; and the empirical economics of AI-assisted
knowledge work, which has so far studied first-generation tools. A
fourth, on open-source and user innovation, supplies the setting.

\subsection{Automation versus augmentation: two generations of AI}

The closest formal antecedent to our two-generation framework is
\citet{idetalamas2025}, who model AI that converts compute into agents
operating either \emph{autonomously} (co-workers that perform tasks) or
\emph{non-autonomously} (co-pilots that assist a human), inside a
knowledge hierarchy with endogenous occupational choice. Their central
prediction --- autonomous AI primarily benefits the most knowledgeable
workers, while non-autonomous AI benefits the least knowledgeable ---
is the vertical counterpart of the distinction we develop horizontally,
across a developer's language portfolio. More broadly, the task-based
automation framework of \citet{acemoglu2018artificial,
acemoglu2019automation, acemoglu2025simple} separates machines
\emph{taking over} tasks from technologies that create or reinstate
human tasks, and \citet{agrawal2018prediction} locate the
augment-versus-automate threshold in the division between machine
prediction and human judgment. We import that division directly: in our
delegation mode the agent executes while the developer retains
specification and verification.

Two recent contributions microfound the delegation mode itself.
\citet{catalini2026agi} model the AI transition as a race between a
falling cost of automated execution and a bounded human capacity to
\emph{verify} machine output, so that as execution costs collapse, the
binding constraint becomes verification --- precisely the role played by
the verification cost $\kappa(a,s)$ in our model, which we make
decreasing in the developer's own ability. \citet{gans2026oring} embed
automation in an O-ring production structure \citep{kremer1993oring},
rationalizing why a single unfamiliar-language component can cap the
value of an entire project --- and why an agent that raises the floor on
that weak link changes what projects are feasible. At the macro end,
\citet{hampole2025ai} summarize AI's labor-market impact through
occupation-level task exposure, and explicitly flag worker-level skill
acquisition and new tasks as outside their model; that individual-level,
extensive margin --- which languages a given developer can produce in ---
is exactly the object our model and data speak to.

\subsection{Human capital, learning, and technology adoption}

The entry barrier at the center of our model descends from
\citet{jovanovic1996learning}: expertise is human capital accumulated by
Bayesian learning, switching to a new technology destroys part of it,
and the loss grows with the size of the technological leap --- so a
sufficiently expert agent may optimally never switch. In our setting the
agentic assistant bears precisely this barrier on the developer's
behalf. The horizontal skill primitive comes from
\citet{gibbons1999wage} and especially the task-specific human capital
of \citet{gibbons2004task}: skill accumulated on a particular task,
portable across employers but forfeited on switching tasks, is the
natural formalization of \emph{language-specific} capital, and its
additive role in output is the insertion point for both generations of
assistance. The Bayesian machinery follows the learning-about-fixed-
characteristics tradition surveyed by \citet{baleyveldkamp2021}.

On the adoption margin, \citet{ulu2009adoption} show in a dynamic
information-acquisition model that a lower fixed adoption cost expands
the adoption region, raises the probability of eventual adoption, and
shortens time-to-adoption; our activation band is the static analogue,
with delegation playing the role of the cost reduction for
unfamiliar-language opportunities. \citet{pastorino2024careers}
integrates learning, assignment, and human capital in a framework where
optimal assignment solves a bandit with dependent arms --- the natural
engine for the follow-on question of \emph{which} unfamiliar language a
developer tries first, which we leave to future work.

A tension in this literature motivates our heterogeneity analysis: the
first-generation empirical evidence finds the largest gains for the
least skilled (below), while \citet{idetalamas2025} predict autonomous
AI favors the most knowledgeable. Our specialist results (Section
\ref{sec:mechanism}) suggest the resolution is horizontal rather than
vertical: what matters for the extensive language margin is not skill
level but \emph{headroom} --- the stock of unfamiliar candidates ---
interacted with the general ability to specify and verify.

\subsection{Empirical effects of AI on knowledge work and code}

A rapidly growing empirical literature measures the productivity effects
of generative AI. \citet{noy2023experimental} find large time savings in
professional writing tasks, concentrated among the less skilled;
\citet{brynjolfsson2024generative} find a 14--15 percent average
productivity gain for customer-support agents, rising to 34 percent for
novices; \citet{dellacqua2023jagged} document a ``jagged frontier'' of
task-level help and harm in consulting. In software specifically,
\citet{peng2023impact} report 56 percent faster task completion in a
GitHub Copilot RCT, and \citet{cui2025effects} find 26 percent average
output gains in field experiments, larger for short-tenure developers.
\citet{anthropic2025skills} and \citet{metr2025measuring} provide early
evidence specific to Claude-assisted and experienced open-source
workflows.

Every study in this literature examines a \emph{first-generation} tool
--- completion or conversational assistance --- and none distinguishes
agentic from conversational use in observational data. Meanwhile the
capability of autonomous agents is documented on the engineering side:
on SWE-bench \citep{jimenez2024swebench}, agents receive a repository
and an issue and must produce a resolving patch end-to-end, including in
multilingual settings. Our contribution is to bring this second
generation into the economics: the Claude Code co-authorship trailer
identifies agentic production at scale in public commit data, the
28-month panel admits dynamics that short experiments cannot, and the
outcome is the \emph{technological frontier} --- which languages a
developer produces in --- rather than the speed of a fixed task.

\subsection{User innovation and open-source software}

The observation that ordinary users innovate, and that open communities
scale this innovation, is associated with
\citet{vonhippel1988sources, vonhippel2005democratizing};
\citet{baldwin2011modeling} formalize the shift from producer to open
collaborative innovation, and \citet{lakhani2003hackers} and
\citet{boudreau2013innovation} document the motivations and outputs of
open-source developers. Our results bear on this literature's feasible
set: if agentic delegation expands the language portfolios of individual
contributors as we document, the matching of contributors to projects
--- and hence the structure of cumulative open-source innovation ---
changes with it. The companion peer-effects work \citep{conti2024peer}
examines the network dimension explicitly.

\section{Background: Two Generations of AI Coding Assistance}
\label{sec:background}

The two-generation taxonomy that organizes the theory is not a labeling
choice; it is a datable fact about the technology. This section
establishes the timeline with primary sources, because three features of
it do identifying work in the empirical design.\footnote{A documented
history with the full chronology and primary-source citations is
available as a background appendix. Dates cited here were verified
against the original announcements.} Throughout, the two generations are
defined by the \emph{locus of execution}, not by model quality:
Generation-1 tools propose text --- a completion, a snippet, an
explanation --- that the developer must read, adapt, place, and run
herself; Generation-2 tools operate the computer --- they open and edit
files across a repository, execute commands, run tests, read the errors,
and iterate --- while the developer specifies and verifies.

\emph{Generation 1 (2019--2023).} LLM code completion predates ChatGPT
by three years (Deep TabNine, 2019). GitHub Copilot --- ``suggesting
whole lines or entire functions'' for the developer to accept --- was
previewed in June 2021 and generally available by June 2022, powered by
the Codex model whose paper also introduced the era's benchmark,
HumanEval: a self-contained function graded by unit tests, with no
repository context, no command execution, and no iteration
\citep{friedman2021copilot,dohmke2022copilotga,chen2021codex}. ChatGPT
(November 2022) added the conversational form
\citep{openai2022chatgpt}, and by mid-2023 a developer had abundant
Generation-1 assistance --- Copilot, CodeWhisperer, Tabnine, Ghostwriter,
ChatGPT, Claude, Bard --- none of which could open a second file, run a
test, or read an error message unprompted.

\emph{The transition (2023).} Within months of GPT-4, open-source
projects (AutoGPT, GPT-Engineer, Aider) wrapped chat models in loops
with file access --- delegation \emph{mechanics} without delegation
\emph{capability}. SWE-bench quantified the gap: given a real GitHub
issue and its repository, the best model of the Generation-1 era
resolved 1.96 percent of tasks \citep{jimenez2024swebench}. However good
chat models were at suggestion-shaped work, they could not do
repository-level software engineering.

\emph{Generation 2 (2024--2025).} No widely used tool crossed from
suggestion to delegation before March 2024, when Devin
\citep{cognition2024devin} and the open-source SWE-agent --- which framed
language models as ``a new category of end users'' of software,
operating it through purpose-built interfaces \citep{yang2024sweagent}
--- arrived within weeks of each other. Capability then improved by more
than an order of magnitude in twenty months: from 1.96 percent
(Claude~2, late 2023) to 33 percent (GPT-4o, August 2024) to 49 percent
(Claude~3.5, October 2024) to above 70 percent (Claude Opus~4, May 2025)
on human-validated repository tasks
\citep{openai2024swebenchverified,anthropic2024computeruse}. Mass-market
delegation arrived in a narrow window: agentic IDE modes in November
2024 (Cursor, Windsurf), Copilot agent mode in February 2025
\citep{github2025agentmode}, \textbf{Claude Code} --- ``an active
collaborator that can search and read code, edit files, write and run
tests, commit and push'' --- in research preview on February 24, 2025
and general availability on May 22, 2025
\citep{anthropic2025claudecode}, OpenAI's Codex agent in May 2025
\citep{openai2025codex}, and Google's Gemini CLI in June 2025
\citep{google2025geminicli}. Appendix \ref{app:tools} classifies each
tool into the model's production modes.

Three features of this timeline matter for the design. First,
\emph{Generation 1 was universally available throughout the
pre-period}: ChatGPT and Copilot predate our earliest adoption cohort
(April 2025) by more than two years, so the pre-adoption counterfactual
is solo work plus mature conversational assistance --- exactly the menu
$\mathcal{M}_1$ of the model --- and event-time changes around Claude
Code adoption cannot reflect gaining access to conversational AI.
Second, \emph{the generational boundary is datable and narrow}:
mass-market delegation became available between November 2024 and
May 2025, so our adoption cohorts (April--September 2025) sit
immediately after the modality arrived --- the right place to measure an
extensive-margin response. Third, \emph{competing agentic tools bracket
the treatment}: Devin, Cursor's agent, Copilot agent mode, Codex, and
Gemini CLI overlap Claude Code's rollout, which is why Section
\ref{sec:data} screens out developers with detectable competing-agent
use before adoption --- for the remaining sample, first detectable
Claude Code use marks entry into the delegation mode itself.

\section{A Model of Agentic AI and Language-Frontier Expansion}
\label{sec:theory}

This section presents the core of the model; derivations, proofs, and
extensions are collected in Appendix \ref{app:proofs}. The key distinction
is between two generations of AI assistance. A first-generation assistant
augments work in languages the developer already understands: it suggests
code, explains errors, and accelerates implementation, but the developer
must still read and integrate the output. A second-generation assistant
adds delegation: it can inspect files, write code, run commands, and debug
failures from a natural-language specification. The empirically relevant
comparison is therefore not ``no AI'' versus ``AI.'' Conversational
assistants were already available before the Claude Code rollout. The
comparison is a menu expansion from solo-plus-augmentation to
solo-plus-augmentation-plus-delegation.

\subsection{Environment and production modes}

There are developers $i$, programming languages $k\in\{1,\ldots,K\}$, and
months $t$. Developer $i$ has a familiar-language set $\mathcal{K}_i$ and
an unfamiliar set $\mathcal{U}_i$ with $U_i\equiv|\mathcal{U}_i|$; a
specialist is a developer with small $|\mathcal{K}_i|$ and therefore many
unfamiliar candidates. Empirically, ``unfamiliar'' means not observed in
the developer's prior public GitHub contributions; it need not imply
that the developer has no private experience with the language. For each developer--language pair, $s_{ik,t}\in[0,1]$
is language-specific execution skill, and the developer holds Normal
beliefs $\theta_{ik}\sim\N(\mu_{ik,t},1/\pi_{ik,t})$ about her latent
productivity match, with low skill and low precision in unfamiliar
languages. General ability $a_i\ge0$ --- the capacity to specify,
decompose, and verify a task --- is language-invariant. Each language-month
carries an opportunity value $\omega_{ik,t}$ and an activation cost
$b_{ik,t}$. The developer has CARA utility with coefficient $\rho$, so a
payoff with mean $m$ and variance $\sigma^2$ has certainty equivalent
$m-\rho\sigma^2/2$ (Appendix \ref{app:proofs}).

Suppressing $(i,k,t)$ subscripts, the three modes deliver net
certainty-equivalent surpluses
\begin{align}
V^S &= \omega+s\mu-\frac{\rho s^2}{2\pi}-b,
\label{eq:solo-surplus}\\
V^C &= V^S+\gamma s-r_C,
\label{eq:copilot-surplus}\\
V^D &= \omega+(1-\lambda)s\mu+\lambda a z(A)
-\kappa(a,s)-r_D-b
-\frac{\rho}{2}
\left[
\frac{(1-\lambda)^2s^2}{\pi}
+\sigma_D^2(a,s,A)
\right].
\label{eq:delegate-surplus}
\end{align}
Solo production exposes the developer to uncertainty about her own match.
Generation-1 augmentation yields a deterministic gain $\gamma s$
proportional to existing skill, net of interaction cost $r_C$: the
developer benefits most when she can read, adapt, and verify the
suggestion herself. Generation-2 delegation hands a share
$\lambda\in(0,1]$ of execution to an agent with competence $z(A)$,
increasing in capability $A$ and scaled by the developer's general
ability; it carries verification cost $\kappa(a,s)$, compute cost $r_D$,
and residual error variance $\sigma_D^2(a,s,A)$, while shrinking the
variance attached to the human match from $s^2/\pi$ to
$(1-\lambda)^2 s^2/\pi$.

\begin{assumption}[Augmentation requires a foothold]
\label{ass:foothold}
For unfamiliar languages with skill $\underline{s}$,
$\gamma\underline{s}-r_C\le0$. For familiar languages with skill $\bar{s}$,
$\gamma\bar{s}-r_C>0$.
\end{assumption}

\begin{assumption}[Verification technology]
\label{ass:verification}
Verification costs and residual error are weakly lower for stronger
developers, more familiar languages, and more capable agents:
$\kappa_a<0$, $\kappa_s\le0$, and $\partial\sigma_D^2/\partial a\le0$,
$\partial\sigma_D^2/\partial s\le0$, $\partial\sigma_D^2/\partial A\le0$.
\end{assumption}

Assumption \ref{ass:foothold} says Generation 1 is valuable on the
intensive margin of known languages but does not by itself make an
unfamiliar language viable on impact.

\subsection{Menus, thresholds, and the activation band}

The pre-agent menu is $\mathcal{M}_1=\{S,C\}$; agentic adoption expands it
to $\mathcal{M}_2=\{S,C,D\}$. Under generation $g$ the developer takes the
best available mode, $V^g_{ik,t}=\max_{m\in\mathcal{M}_g}V^m_{ik,t}$, and
language $k$ is \emph{active} when $Z^g_{ik,t}=\1{V^g_{ik,t}\ge0}$, so the
monthly language count is $N^g_{it}=\sum_k Z^g_{ik,t}$.

Because each surplus is affine in the opportunity shock $\omega$, each
mode has an activation threshold. Setting $V^S\ge0$ in Equation
\eqref{eq:solo-surplus}, solo production is viable when
\begin{equation}
\omega\ge
T^S\equiv b-s\mu+\frac{\rho s^2}{2\pi},
\label{eq:solo-threshold}
\end{equation}
and since $V^C=V^S+\gamma s-r_C$, augmentation is viable when
$\omega\ge T^C=T^S-(\gamma s-r_C)$. The developer takes whichever
pre-agent mode clears first, so the effective Generation-1 threshold is
\begin{equation}
T^1=\min\{T^S,T^C\}
=T^S-\max\{0,\gamma s-r_C\},
\label{eq:gen1-threshold}
\end{equation}
and for an unfamiliar language Assumption \ref{ass:foothold} implies
$\gamma s-r_C\le0$, hence $T^1=T^S$: augmentation does not move the
entry margin. Setting $V^D\ge0$ in Equation \eqref{eq:delegate-surplus},
delegation is viable when
\begin{align}
\omega\ge T^D
\equiv{}&
b-(1-\lambda)s\mu-\lambda a z(A)
+\kappa(a,s)+r_D
\nonumber\\
&\quad
+\frac{\rho}{2}
\left[
\frac{(1-\lambda)^2s^2}{\pi}
+\sigma_D^2(a,s,A)
\right],
\label{eq:delegate-threshold}
\end{align}
so the post-agent threshold is $T^2=\min\{T^1,T^D\}\le T^1$.

For an unfamiliar language, the economics reduce to the \emph{agentic
threshold reduction} $B\equiv T^1-T^D=T^S-T^D$. Differencing Equations
\eqref{eq:solo-threshold} and \eqref{eq:delegate-threshold} (algebra in
Appendix \ref{app:proofs}) gives
\begin{equation}
B
=
\lambda\bigl[a z(A)-s\mu\bigr]
-\kappa(a,s)-r_D
+\frac{\rho}{2}
\left[
\frac{(2\lambda-\lambda^2)s^2}{\pi}
-\sigma_D^2(a,s,A)
\right].
\label{eq:agentic-advantage}
\end{equation}
The first term is expected execution substitution: agentic output
replaces a share $\lambda$ of the human's uncertain language-specific
output with agent execution $a z(A)$. The second is the verification and
compute cost of delegating. The third is risk substitution: delegation
shrinks the variance attached to the human match but introduces residual
agent-error risk. If $B>0$, delegation lowers the entry threshold for
the unfamiliar language, and by Assumption \ref{ass:verification} the
reduction is larger for higher-ability developers and more capable
agents.

Menu expansion alone delivers a weak frontier result:

\begin{proposition}[Frontier expansion]
\label{prop:frontier}
For every developer, language, date, and opportunity realization,
$Z^2_{ik,t}\ge Z^1_{ik,t}$, hence $N^2_{it}\ge N^1_{it}$ path by path.
\end{proposition}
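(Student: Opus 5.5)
The argument is a pure menu-inclusion argument and needs no assumptions beyond the definitions. Fix a developer $i$, language $k$, month $t$, and a realization of the opportunity shock $\omega_{ik,t}$ together with all other primitives $(s,\mu,\pi,a,b,A,\lambda,\ldots)$. The key observation is that the surpluses $V^S$, $V^C$, $V^D$ in Equations \eqref{eq:solo-surplus}--\eqref{eq:delegate-surplus} depend only on these primitives and not on which menu is available. Adoption changes the set over which the developer maximizes, not the payoff of any individual mode. Since $\mathcal{M}_1=\{S,C\}\subseteq\mathcal{M}_2=\{S,C,D\}$, the maximum over a larger set is weakly larger, so
\begin{equation*}
V^2_{ik,t}=\max\{V^S,V^C,V^D\}=\max\{V^1_{ik,t},V^D\}\ge V^1_{ik,t}.
\end{equation*}

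Next I would pass to the activation indicators. The map $x\mapsto\1{x\ge0}$ is nondecreasing, so $V^2_{ik,t}\ge V^1_{ik,t}$ gives $Z^2_{ik,t}\ge Z^1_{ik,t}$ directly. Equivalently, in threshold form, $T^2=\min\{T^1,T^D\}\le T^1$, and $\omega\ge T^1$ implies $\omega\ge T^2$. I would note the threshold version to connect with Equations \eqref{eq:gen1-threshold} and \eqref{eq:delegate-threshold}, but the value-function version is cleaner. It does not require the affine-in-$\omega$ structure, nor Assumptions \ref{ass:foothold} and \ref{ass:verification}, and it does not require the sign of $B$.

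Finally, summing over $k$ yields $N^2_{it}=\sum_k Z^2_{ik,t}\ge\sum_k Z^1_{ik,t}=N^1_{it}$. The realization was arbitrary, so the inequality holds path by path, i.e.\ for every realization rather than merely in expectation.

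The only point requiring care, rather than a genuine obstacle, is making explicit that the comparison is counterfactual at identical primitives. The beliefs $(\mu_{ik,t},\pi_{ik,t})$, the skill $s_{ik,t}$, and the shock $\omega_{ik,t}$ are held fixed across the two menus. No dynamic feedback, such as learning from delegated work altering future beliefs, enters the statement. This is why the result is the ``weak'' frontier result. I would flag in a remark that sharper statements, such as strict expansion on the activation band $T^D\le\omega<T^1$ when $B>0$, require the threshold algebra and the assumptions.
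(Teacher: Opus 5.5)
Your proof is correct and is essentially the paper's argument. Menu inclusion $\mathcal{M}_1\subset\mathcal{M}_2$ gives $V^2_{ik,t}\ge V^1_{ik,t}$, and monotonicity of the activation indicator then gives $Z^2_{ik,t}\ge Z^1_{ik,t}$; the paper phrases this step as a case split on the sign of $V^1_{ik,t}$, which is equivalent. Summing over $k$ gives $N^2_{it}\ge N^1_{it}$.
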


Proposition \ref{prop:frontier} is nearly mechanical --- the developer can
always ignore the delegation option. The economically substantive
prediction is \emph{where} delegation changes behavior:

\begin{proposition}[Activation band for unfamiliar languages]
\label{prop:band}
Consider an unfamiliar language satisfying Assumption \ref{ass:foothold}.
If $B_{ik,t}>0$, then
\begin{equation}
Z^2_{ik,t}-Z^1_{ik,t}
=
\1{T^D_{ik,t}\le\omega_{ik,t}<T^S_{ik,t}}.
\label{eq:band-indicator}
\end{equation}
If the conditional opportunity CDF $F_{ik,t}$ is continuous, the
probability that delegation activates the language is
$F_{ik,t}(T^S_{ik,t})-F_{ik,t}(T^D_{ik,t})$, and the expected
language-count expansion is
\begin{equation}
\E\!\left[N^2_{it}-N^1_{it}\right]
=
\sum_{k}
\left[
F_{ik,t}\!\left(T^1_{ik,t}\right)
-
F_{ik,t}\!\left(T^2_{ik,t}\right)
\right]
\ge 0.
\label{eq:expected-count}
\end{equation}
\end{proposition}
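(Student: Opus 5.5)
The plan is to reduce everything to threshold comparisons, using the affine-in-$\omega$ structure established in the text before Proposition \ref{prop:frontier}. First I will record that, under generation $g$, $Z^g_{ik,t}=\1{\omega_{ik,t}\ge T^g_{ik,t}}$. The reason is that $V^g=\max_{m\in\mathcal{M}_g}V^m$. Each $V^m$ has the form $\omega-T^m$, since every surplus in Equations \eqref{eq:solo-surplus}--\eqref{eq:delegate-surplus} carries $\omega$ with coefficient one. Hence $V^g\ge0$ iff $\omega\ge\min_{m\in\mathcal{M}_g}T^m=T^g$. This gives the expressions $T^1$ in Equation \eqref{eq:gen1-threshold} and $T^2=\min\{T^1,T^D\}$.

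For an unfamiliar language satisfying Assumption \ref{ass:foothold}, $\gamma s-r_C\le0$, so $T^1=T^S$. The hypothesis $B>0$ means $T^D<T^S$, so $T^2=T^D$. Then
\[
Z^2-Z^1=\1{\omega\ge T^D}-\1{\omega\ge T^S}.
\]
Since $T^D<T^S$, the event $\{\omega\ge T^S\}$ is contained in $\{\omega\ge T^D\}$. The difference of indicators is therefore the indicator of the set difference, $\{T^D\le\omega<T^S\}$, which is Equation \eqref{eq:band-indicator}.

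Taking expectations conditional on the thresholds, the activation probability is $\Pr(T^D\le\omega<T^S)=\Pr(\omega<T^S)-\Pr(\omega<T^D)$. This equals $F(T^S)-F(T^D)$ once $F$ is continuous. Continuity is used only to identify $\Pr(\omega<x)$ with $F(x)$, because there are no atoms at the thresholds.

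For Equation \eqref{eq:expected-count}, I will apply the same indicator argument to every language $k$, with no foothold or $B>0$ assumption. In general $T^2\le T^1$, so $Z^2_k-Z^1_k=\1{T^2_k\le\omega_k<T^1_k}\ge0$; this is Proposition \ref{prop:frontier}. Its expectation is $F_k(T^1_k)-F_k(T^2_k)$. When $T^D\ge T^1$ the two thresholds coincide and the term is zero, consistent with the formula. Summing over $k$ and using linearity of expectation (the sum is finite, $K$ terms) gives the display, and nonnegativity follows from monotonicity of CDFs.

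The only delicate point is what is random. I will treat $(s,\mu,\pi,a,b,\ldots)$, and hence the thresholds, as fixed given the conditioning information underlying $F_{ik,t}$. Then $\omega$ is the sole source of randomness, and the CDF evaluations are legitimate. If the thresholds were random, I would state the result conditionally and iterate expectations. The other thing to keep consistent is the weak-versus-strict inequality convention ($\ge0$ activates), which is what makes the band half-open.
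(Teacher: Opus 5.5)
Your proof is correct and follows the paper's own argument: Assumption~\ref{ass:foothold} gives $T^1=T^S$, $B>0$ gives $T^2=T^D<T^S$, the nested indicators $\1{\omega\ge T^D}$ and $\1{\omega\ge T^S}$ differ exactly on the half-open band, and continuity of $F$ turns the band's probability into $F(T^S)-F(T^D)$. You go somewhat further than the paper, whose proof stops at the band probability, by deriving $Z^g=\1{\omega\ge T^g}$ explicitly from the unit coefficient on $\omega$ and by proving Equation~\eqref{eq:expected-count} with the same indicator argument applied to every language, using only $T^2\le T^1$ and not the foothold or $B>0$ hypotheses; this is the right way to handle a sum that also runs over familiar languages.
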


Delegation does not need to make every opportunity attractive. It
activates the \emph{middle band}: opportunities too weak for solo or
conversational production but strong enough once the agent can execute
part of the task. The band is the model's central object, and the
newly-used-language outcome in the empirical analysis is its direct
counterpart. Figure \ref{fig:band} illustrates, placing the two
historical tool generations of Section \ref{sec:background} on the
opportunity line.

\begin{figure}[H]
\centering
\begin{singlespace}
\begin{tikzpicture}[scale=0.95, every node/.style={font=\small}]
\draw[->, thick] (0,0) -- (13.2,0)
  node[right, font=\small\itshape] {$\omega_{ik,t}$};
\fill[blue!14] (4.6,0) rectangle (8.8,1.35);
\draw[very thick] (4.6,-0.18) -- (4.6,1.55);
\node[above right] at (4.62,1.5) {$T^D_{ik,t}$};
\draw[very thick] (8.8,-0.18) -- (8.8,1.55);
\node[above right] at (8.82,1.5) {$T^1_{ik,t}=T^S_{ik,t}$};
\node[align=center] at (2.3,0.72)
  {inactive under both menus\\[-1pt] $Z^1=Z^2=0$};
\node[align=center] at (6.7,0.72)
  {\textbf{activation band}\\[-1pt] $Z^1=0,\ Z^2=1$};
\node[align=center] at (11.1,0.72)
  {active under both menus\\[-1pt] $Z^1=Z^2=1$};
\node[align=left, anchor=south west, text width=5.5cm, blue!40!black]
  (gentwo) at (0.0,3.05)
  {Generation 2 (Claude Code, Devin, Codex, Gemini CLI, \ldots) creates
   mode $D$: entry now at $T^D < T^1$};
\draw[-{Stealth}, thick, blue!60!black] (3.2,3.0) -- (4.5,1.65);
\node[align=left, anchor=south west, text width=5.9cm, gray!30!black]
  (genone) at (7.0,3.05)
  {Generation 1 (ChatGPT, Copilot, \ldots) does \emph{not} move this
   threshold for unfamiliar $k$: $\gamma s - r_C \le 0$
   (Assumption \ref{ass:foothold}), so $T^1 = T^S$};
\draw[-{Stealth}, thick, gray!70!black] (8.1,3.0) -- (8.72,1.65);
\draw[decorate, decoration={brace, mirror, amplitude=6pt}]
  (4.6,-0.35) -- (8.8,-0.35);
\node[align=center, below] at (6.7,-0.62)
  {languages used \emph{only} because of delegation\\
   $Z^2_{ik,t}-Z^1_{ik,t} = \1{T^D_{ik,t}\le
   \omega_{ik,t} < T^1_{ik,t}}$\\
   $\Rightarrow$ observed as \emph{newly-used languages} at adoption};
\end{tikzpicture}
\end{singlespace}
\caption{The activation band for an unfamiliar language $k$, with the
tool generations of Section \ref{sec:background} placed on it.
Generation-1 tools leave the entry threshold at $T^1=T^S$ because their
value is proportional to skill the developer does not have in $k$;
Generation-2 tools add delegation, whose threshold $T^D$ does not
require language-specific skill. Opportunities falling in $[T^D, T^1)$
are produced only under menu $\mathcal{M}_2$ --- the model's prediction
for the spike in newly-used languages at first detectable Claude Code
use.}
\label{fig:band}
\end{figure}

\begin{remark}[What the model does and does not claim]
\label{rem:production-frontier}
The frontier in this model is a \emph{production} frontier, not a
\emph{skill} frontier. A language enters the active set when the
developer can profitably deliver work in it under \emph{some} mode in
her menu --- including delegation, in which the agent executes share
$\lambda$ while the developer specifies, decomposes, and verifies. The
model does not claim that adoption raises language-specific skill
$s_{ik,t}$: a developer who could only write Python solo may now ship a
Rust project, but by directing an agent, not by coding Rust unassisted.
Delegated output is nonetheless not autonomous machine output --- its
value $\lambda a_i z(A)$ is zero for a developer with no general
ability, and it carries the human verification cost $\kappa(a,s)$.
Skill acquisition enters only through the learning extension in
Appendix \ref{app:proofs}, which we treat as a secondary channel.
\end{remark}

\subsection{Flows versus stocks}

The band prediction concerns first uses, which are a \emph{flow}: a given
language can enter the portfolio only once. The corresponding \emph{stock}
is the cumulative count of languages ever used, and the two move
differently after adoption.

\begin{proposition}[Dynamic cumulative-language effect]
\label{prop:dynamic}
For an initially unfamiliar language, let $p^g_{ik}$ be the per-period
first-use hazard under generation $g$. If $p^2_{ik}\ge p^1_{ik}$, the
expected cumulative-language effect at event-time horizon $s$ is
\begin{equation}
\Delta C_i(s)
=
\sum_{k\in\mathcal{U}_i}
\left[
(1-p^1_{ik})^{s+1}-(1-p^2_{ik})^{s+1}
\right]\ge0,
\label{eq:cumulative-gap}
\end{equation}
which in the closed-frontier benchmark $p^1_{ik}=0<p^2_{ik}$ is strictly
increasing and concave over the observed horizon.
\end{proposition}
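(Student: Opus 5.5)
The plan is to model first use of each initially unfamiliar language as a geometric waiting time and compare survival probabilities across the two menus, summing over $\mathcal{U}_i$ by linearity. Fix developer $i$ and a language $k\in\mathcal{U}_i$. Index event time so that adoption occurs at $s=0$, and treat the periods $0,1,\ldots,s$ as $s+1$ independent Bernoulli trials with per-period first-use probability $p^g_{ik}$. The event ``$k$ has not been used by horizon $s$'' then has probability $(1-p^g_{ik})^{s+1}$. The indicator that $k$ has entered the cumulative stock therefore has expectation $1-(1-p^g_{ik})^{s+1}$.

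Let $C^g_i(s)$ be the cumulative count of languages ever used. By linearity of expectation over $k\in\mathcal{U}_i$, with familiar languages contributing identically under both menus and cancelling,
\[
\E[C^2_i(s)]-\E[C^1_i(s)]=\sum_{k\in\mathcal{U}_i}\left[(1-p^1_{ik})^{s+1}-(1-p^2_{ik})^{s+1}\right],
\]
which is the expression for $\Delta C_i(s)$ in the statement. Nonnegativity follows termwise: $x\mapsto(1-x)^{s+1}$ is nonincreasing on $[0,1]$ and $p^2_{ik}\ge p^1_{ik}$. The hazard ordering itself is justified by Proposition \ref{prop:frontier} applied period by period, since $Z^2\ge Z^1$ path by path implies that any first use under $\mathcal{M}_1$ occurs no later under $\mathcal{M}_2$.

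For the closed-frontier benchmark $p^1_{ik}=0<p^2_{ik}$, each term becomes $f_k(s)=1-q_k^{s+1}$ with $q_k=1-p^2_{ik}\in[0,1)$. The first difference is $f_k(s+1)-f_k(s)=q_k^{s+1}p^2_{ik}$. It is strictly positive when $q_k>0$, and it is zero from $s=0$ onward when $p^2_{ik}=1$. The second difference is $q_k^{s+1}p^2_{ik}(q_k-1)\le0$, so each term is concave. Equivalently, treating $s$ as continuous, $f_k$ has derivative $-q_k^{s+1}\ln q_k>0$ and second derivative $-q_k^{s+1}(\ln q_k)^2<0$.

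Summing over $k$ preserves concavity. The sum is strictly increasing provided at least one unfamiliar $k$ has $0<p^2_{ik}<1$. If every $p^2_{ik}=1$, the gap saturates at $U_i$ immediately and is only weakly increasing. I will state this interior-hazard caveat explicitly, together with the requirement that $\mathcal{U}_i$ be nonempty.

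I expect the main obstacle to be the modelling step, not the algebra. Constant, independent per-period hazards must be reconciled with the threshold model. The cleanest route is to assume the opportunity shocks $\omega_{ik,t}$ are i.i.d. over $t$ with a stationary CDF $F_{ik}$ and stationary thresholds, so that Proposition \ref{prop:band} gives per-period activation probabilities $p^g_{ik}=1-F_{ik}(T^g_{ik})$. Under this assumption, $T^2\le T^1$ delivers $p^2\ge p^1$ directly. In the closed-frontier case, $p^1_{ik}=0$ corresponds to $F_{ik}(T^S_{ik})=1$, meaning unfamiliar languages are never activated under the pre-agent menu.
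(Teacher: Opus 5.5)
Your proposal is correct and follows the paper's proof essentially step for step. Both arguments use the geometric survival probabilities $(1-p^g_{ik})^{s+1}$, sum over $\mathcal{U}_i$, get nonnegativity term by term, and in the closed-frontier case compute the first and second differences $p^2_{ik}(1-p^2_{ik})^{s+1}$ and $-(p^2_{ik})^2(1-p^2_{ik})^{s+1}$.

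Your caveat that strict increase (and strict concavity) requires some $k\in\mathcal{U}_i$ with $0<p^2_{ik}<1$ is a genuine refinement, because the paper asserts $p^2(1-p^2)^{s+1}>0$ without excluding $p^2_{ik}=1$. Your appeal to Proposition \ref{prop:frontier} and the i.i.d.-shock grounding of the hazards are not needed, since $p^2_{ik}\ge p^1_{ik}$ is a hypothesis of the statement.
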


Newly-used languages may therefore spike at adoption and revert as the
at-risk set is depleted, while the cumulative stock keeps rising --- the
signature pattern the event studies test. Two extensions are developed in
Appendix \ref{app:proofs}: expected expansion is largest for high-ability
specialists, who combine many unfamiliar candidates with low verification
costs (Proposition \ref{prop:specialist}), and Bayesian learning from
agentic interaction can convert initially delegated languages into
persistent portfolio members by raising belief precision.

\subsection{Empirical mapping}

The model maps directly to the outcomes in Section \ref{sec:data}. The
monthly language count and language entropy capture the active-language
set of Equation \eqref{eq:expected-count}. Newly-used languages isolate
the activation band of Proposition \ref{prop:band}. Cumulative languages
correspond to the stock dynamics of Proposition \ref{prop:dynamic}.
Repository and commit counts are activity diagnostics rather than
frontier outcomes: they help establish that language expansion reflects
substantive new work, and they feed the volume-based robustness checks.
The specialist-heterogeneity prediction motivates a test by pre-adoption
portfolio breadth and ability proxies, carried out in Section
\ref{sec:mechanism}.

\section{Data}
\label{sec:data}

The empirical analysis combines two ingredients assembled from public
GitHub data: the universe of Claude-co-authored commits between January
2025 and January 2026, which dates each developer's first detectable use
of Claude Code, and a newly constructed \emph{commit-level} record of
every sampled developer's public contribution history over a 28-month
window (January 2024 through April 2026), in which the programming
language of each commit is measured from the files it changed. We
describe the treatment data, the sample, the commit-level outcome
construction, and the resulting analysis panel in turn.

\subsection{Source: Claude-co-authored commits}

Claude Code, released in beta in February 2025 and at general availability
in May 2025, attaches a co-author trailer of the form
\texttt{Co-Authored-By: Claude} to every commit it produces.\footnote{The
trailer is a Git convention by which any number of additional authors can
be associated with a commit; co-authors appear in the commit metadata and
are surfaced on GitHub. The Claude trailer carries a model identifier,
e.g.\ \texttt{claude-opus-4} or \texttt{claude-sonnet-4}, which we use
descriptively but do not exploit for identification.} Because this trailer
is machine-readable and propagates to GitHub's public event API, every
commit produced with Claude assistance can be detected and aggregated at
scale. We harvest 7{,}786{,}771 such commits across the 13-month window
of Claude availability, covering 185{,}517 distinct authors.

For each commit we record the SHA, the repository (owner/name), the commit
and authored timestamps, the committer and author identities (login,
email), the commit message, and the detected Claude model. We discard
1.6 million commits with missing \texttt{author\_login}, which are
predominantly bots and squash-merge artifacts.

\subsection{Sample construction}

We define a developer's \emph{adoption date} as her first Claude-co-authored
commit. This gives us 185{,}517 staggered adoption events distributed
across the 13-month window: 1{,}557 developers adopt in 2025\,Q1, 11{,}894
in Q2, 46{,}700 in Q3, 73{,}151 in Q4, and 52{,}215 in 2026\,Q1. The mass
of adoptions in late 2025 reflects the public launch of Claude Code in
May 2025 and the subsequent expansion of its free tier.

From this population we construct two analysis samples. The treated sample
contains 5{,}000 \emph{early adopters}: developers whose first Claude
commit falls in Q2 or Q3 of 2025. Within this window we further require at
least five Claude-co-authored commits to exclude one-time experimenters,
and we drop GitHub bot accounts identified by standard naming patterns
(e.g.\ \texttt{[bot]}, \texttt{dependabot}, \texttt{renovate}). The
eligible pool is 36{,}718 developers; we draw 5{,}000 stratified
proportionally by adoption month and equally across three intensity tiers
(5--10 commits, 11--50, 50+) to ensure variation in both treatment timing
and commitment. Because eligibility conditions on at least five
Claude-co-authored commits --- a post-adoption behavior --- the estimand
throughout is the event-time change among \emph{sustained adopters},
not among all developers who ever try the tool; one-time experimenters
are outside the sample by construction.

The control sample contains 5{,}000 \emph{not-yet-treated} developers:
authors who adopt Claude in 2025\,Q4 or 2026\,Q1 under the same eligibility
filters and stratification scheme. During the treated cohort's adoption
window of April--September 2025, the control developers have zero Claude
usage. This choice of comparison group is the standard recommendation in
the \citet{callaway2021difference} framework for staggered designs: the
controls are eventual adopters and therefore arguably similar in
unobserved tastes to the treated group, simply with a later adoption
date. Control developers serve as comparisons only in periods before
their own adoption.

Because name-pattern filters do not catch all automation, we additionally
screen the sampled accounts on behavioral criteria and exclude 13
single-repository automation accounts --- accounts committing more than
400 times per active month with over 90 percent of commits concentrated
in one repository (e.g.\ auto-generated content mirrors and data-logging
bots). These accounts contribute large commit volumes but essentially no
language information.

\paragraph{Competing agentic tools.}
Claude Code is not the only agentic assistant, and a developer's first
Claude commit is only informative about her first \emph{agentic}
experience if no other agent preceded it. We therefore screen every
commit message in the sample for the attribution signatures of other
agentic tools --- GitHub Copilot, OpenAI Codex, Cursor, Aider, Devin,
Gemini/Jules, and OpenHands --- using the same co-author-trailer
methodology that identifies Claude commits. Detectable competing-agent
use is an order of magnitude rarer than Claude use: the seven tools
combined co-author 0.8 percent of the sample's commits, against 21.7
percent for Claude, with GitHub Copilot the most common (8 percent of
developers show any use). We exclude the 214 developers (110 treated,
104 control) with any detectable competing-agent commit \emph{before
their own Claude adoption month}. The exclusion window is
developer-specific rather than a calendar date because each developer
plays two roles in the staggered design: the restriction guarantees
both that her pre-treatment history is agent-free and that she is
agent-free in every month she serves as a not-yet-treated comparison
for earlier cohorts --- both roles end at her own adoption.
Competing-agent use \emph{after} a developer's adoption is deliberately
not conditioned on: it is post-treatment behavior, plausibly caused by
the adoption event itself, and selecting on it would bias the
comparison. Detection is a lower bound --- tools that leave no commit
trace, such as autocomplete use of Copilot or Cursor, are invisible to
any commit-based measure --- a limitation we return to in Section
\ref{sec:discussion}.

\subsection{Commit-level contribution histories}
\label{sec:data-commitlevel}

GitHub attaches no language to a commit; language must be recovered from
the files each commit changed. Prior drafts of this literature (and an
earlier version of this paper) have relied on repository-level proxies
--- assigning every commit the \emph{primary language} of its repository
--- which mismeasures multi-language work inside repositories and
understates language breadth.\footnote{A repository hosting, say, a
TypeScript front end, a Python service, and SQL migrations is assigned
the single label ``TypeScript,'' so any work in the other languages is
invisible to a repository-level measure.} We therefore rebuild the
outcome data at the commit level in three steps.

\emph{Repository discovery.} For each sampled developer we assemble the
set of repositories she committed to during the window from three
sources: (i) GitHub's contribution API
(\texttt{contributionsCollection}); (ii) the repositories appearing in
the raw Claude-commit data; and (iii) the GH~Archive record of public
\texttt{PushEvent}s keyed to the developer's login. The third source is
important: the contribution API credits only commits on a repository's
default branch that are linked to the account's registered email, whereas
push events fire for any branch and any authoring email. Author-centric
archive discovery adds 45 percent more (developer, repository) pairs than
the first two sources combined, for a total of 133{,}096 pairs.

\emph{Commit enumeration and file extraction.} Within each discovered
repository we enumerate all commits authored by the sampled developer
with commit dates in the window, yielding 3{,}151{,}624 distinct commits
after de-duplicating commits that appear in both forks and upstream
repositories. For each commit we then extract the full list of changed
file paths and the complete commit message from bare metadata-only
clones of each repository, with a per-SHA API fallback for commits no
longer reachable from any branch (e.g.\ after force-pushes). The
procedure recovers changed-file records for 99.997 percent of enumerated
commits --- 3{,}151{,}536 commits touching 57.2 million files. Merge
commits are flagged by parent count and excluded from language
construction, since their file lists duplicate work already recorded in
the merged commits.

\emph{Language classification.} Each changed file is assigned a language
with GitHub's Linguist library, applied to the file path: exact-filename
rules (e.g.\ \texttt{Dockerfile}, \texttt{Makefile}) take precedence,
followed by extension rules, resolving ambiguous extensions to Linguist's
popular default. We retain only files Linguist classifies as
\emph{programming} languages --- excluding markup, data, and prose
formats (Markdown, JSON, YAML) as well as binaries --- and apply
Linguist's path-based vendored, generated, and documentation exclusions,
mirroring the rules GitHub itself uses for repository language
statistics.

\subsection{Outcome variables}
\label{sec:data-outcomes}

From the commit-level records we construct four \emph{primary} outcomes
that measure language-portfolio breadth, all defined over the
programming-language files changed by the developer's non-merge commits
in month $t$:

\begin{enumerate}
\item $N_{it}$, the number of distinct programming languages developer
$i$ works in during month $t$;
\item $\Delta N_{it}^{\text{new}}$, the number of languages first used
in month $t$ --- languages appearing in none of the developer's earlier
months in the observation window;\footnote{Novelty is measured relative
to the developer's observed history from January 2024 onward, not her
full lifetime history; a language used before 2024 and revived during
the window is counted as ``new.'' This measurement choice is symmetric
across treated and control developers and biases against finding
\emph{differential} novelty only if revival propensities differ by
group.}
\item $C_{it}$, the cumulative count of distinct languages used through
month $t$ (the portfolio stock); and
\item $H_{it}$, the Shannon entropy of the developer's file changes
across languages, $H_{it} = -\sum_{k} p_{itk}\log(p_{itk})$, where
$p_{itk}$ is the share of month-$t$ changed files in language $k$ ---
a secondary diversity measure capturing balance as well as count.
\end{enumerate}

The flow $\Delta N_{it}^{\text{new}}$ and the stock $C_{it}$ are the
most direct empirical counterparts of the model's activation-band and
portfolio-accumulation predictions.

By construction, these outcomes measure the languages of the
developer's \emph{shipped work under any production mode} --- solo,
augmented, or delegated --- and therefore include Claude-co-authored
commits. This is deliberate: the model's frontier object is the
production portfolio $N^2_{it}$, the set of languages in which the
developer can deliver working code by whichever mode she selects
(Remark \ref{rem:production-frontier}), not her unassisted coding
skill. Section \ref{sec:robustness} decomposes the outcomes into
assisted and unassisted components to make the distinction explicit.

Two further variables are reported as \emph{diagnostics} of engagement
and activity volume rather than as frontier outcomes: $R_{it}$, the
number of distinct repositories committed to in month $t$, and the
monthly commit count. The primary outcomes measure language-portfolio
breadth; the repository and commit outcomes characterize the scale of
activity surrounding adoption and feed the volume-based robustness
checks in Section \ref{sec:robustness}.

\subsection{Analysis sample and summary statistics}

The developer-month panel covers 5{,}825 developers over 28 months. For
estimation we impose two restrictions: developers must have at least one
commit in their pre-adoption window (excluding accounts with no
observable pre-treatment activity, for whom the Callaway--Sant'Anna
estimator places zero weight regardless), and no detectable
competing-agent use before their own adoption (the screen described
above). The estimation sample contains 5{,}346 developers (2{,}813
treated and 2{,}533 control) and 149{,}688 developer-month observations.
Table \ref{tab:summary} reports pre- and post-adoption means in the
treated group alongside the control mean, plus the simple cross-sectional
difference.

\begin{table}[H]
\caption{Summary statistics by group and period.}
\label{tab:summary}
\centering
\begin{threeparttable}
\begin{tabular}{lcccc}
\toprule
 & \multicolumn{2}{c}{Treated} & \multicolumn{1}{c}{Control} & \\
\cmidrule(lr){2-3}\cmidrule(lr){4-4}
Variable & Pre-adoption & Post-adoption & All months & Diff. (Post $-$ Control) \\
\midrule
Monthly commits & 10.68 & 40.11 & 15.91 & 24.20 \\
 & (55.89) & (113.06) & (75.83) & \\
Repositories & 0.94 & 2.03 & 0.95 & 1.08 \\
 & (2.67) & (3.71) & (2.26) & \\
Programming languages & 0.90 & 2.30 & 1.03 & 1.27 \\
 & (1.94) & (3.07) & (2.18) & \\
Language entropy (Shannon) & 0.15 & 0.37 & 0.16 & 0.20 \\
 & (0.33) & (0.48) & (0.36) & \\
Newly-used languages & 0.31 & 0.45 & 0.31 & 0.14 \\
 & (1.19) & (1.37) & (1.16) & \\
Cumulative languages & 3.06 & 8.88 & 3.71 & 5.17 \\
 & (4.49) & (6.39) & (5.08) & \\
\midrule
Developers & \multicolumn{2}{c}{2,813} & 2,533 & \\
Months & \multicolumn{4}{c}{28} \\
Observations (panel rows) & \multicolumn{4}{c}{149,688} \\
\bottomrule
\end{tabular}
\begin{tablenotes}\footnotesize
\item \emph{Notes.} Means with standard deviations in parentheses.
Outcomes are constructed from commit-level records: language variables
are defined over the programming-language files changed by the
developer's non-merge commits in the month (Section
\ref{sec:data-outcomes}). Means are taken over \emph{all}
developer-months in the balanced panel, including months with no
public commit activity (coded zero); GitHub activity is sporadic, so
means can fall below one --- conditional on using at least one
programming language in a month, the pre-adoption treated mean is 2.9
languages. ``Pre-adoption'' and ``Post-adoption'' refer to
event time relative to the treated developer's first Claude-co-authored
commit. The control column pools all 28 months for the later-adopting
cohorts; because these developers adopt within the panel (October
2025--January 2026), their final months are post-adoption --- in the
estimation they serve as comparisons only in months before their own
adoption. The fourth column is the cross-sectional difference between
treated-post and control means, not a causal effect. Sample restricted
to developers with pre-adoption commit activity and no detectable
competing-agent use before their own adoption.
\end{tablenotes}
\end{threeparttable}
\end{table}

The pattern in Table \ref{tab:summary} is consistent with substantial
within-developer change at adoption. Treated developers work in 1.4 more
programming languages per month post-adoption than pre-adoption (2.30
against 0.90, an increase of over 150 percent), and the differential
against the contemporaneous control mean (1.03) is similar in magnitude.
Monthly commits and repositories follow the same pattern. These are
uncontrolled comparisons that confound adoption with calendar-time
trends; the event-time estimates in Section \ref{sec:results} discipline
them through the staggered design.

\section{Empirical Strategy}
\label{sec:strategy}

The empirical task is to estimate event-time changes in developers'
language portfolios around first detectable Claude Code use. The
estimands of interest are the dynamic treatment-effect profiles for the
four primary language outcomes of Section \ref{sec:data-outcomes} ---
monthly distinct languages, newly-used languages, cumulative languages,
and language entropy --- with repositories and commits estimated
alongside as diagnostics. Because adoption is voluntary, we interpret
the estimates as event-time associations around adoption, disciplined by
the staggered design, rather than as definitive causal effects; Section
\ref{sec:discussion} discusses the selection threats in detail.

Two features of the setting dictate the choice of estimator. First,
treatment is \emph{staggered}: developers adopt Claude at different
months between April 2025 and January 2026. Second, treatment effects are
plausibly \emph{heterogeneous} across cohorts and time horizons: an
April 2025 adopter has had different exposure (to model upgrades, pricing
changes, and accumulated learning-by-doing) than a January 2026 adopter
at the same event time. Conventional two-way fixed-effects (TWFE)
specifications are known to be biased under these conditions
\citep{goodmanbacon2021difference,dechaisemartin2020two,sun2021estimating}.
We therefore adopt the group-time average treatment effect estimator of
\citet{callaway2021difference}.

\subsection{Group-time average treatment effects}

Let $i$ index developers, $t$ index months (coded $1, 2, \ldots, 28$ for
January 2024 through April 2026), and $G_i$ denote the period in which
developer $i$ first uses Claude detectably. All sampled developers
eventually adopt; at each date, cohorts that have not yet adopted serve
as the comparison group for cohorts that have, and each control developer
contributes comparisons only in periods before her own adoption. Let
$Y_{it}(0)$ denote the potential outcome for developer $i$ in period $t$
in the absence of treatment, and let $Y_{it}(g)$ denote the potential
outcome for a developer first treated in period $g$. The
\emph{group-time average treatment effect} is
\begin{equation}
ATT(g, t) = \E\!\left[\,Y_{it}(g) - Y_{it}(0)\,\middle|\,G_i = g\,\right],
\qquad t \geq g.
\label{eq:attgt}
\end{equation}
Each $ATT(g, t)$ is identified under three assumptions. Conditional
parallel trends requires
$\E[Y_{it}(0) - Y_{i,t-1}(0) \mid G_i = g] = \E[Y_{it}(0) - Y_{i,t-1}(0)
\mid G_i \in \mathcal{C}_t]$, where $\mathcal{C}_t$ is the set of
not-yet-treated comparison cohorts at period $t$. Limited treatment
anticipation bounds how far ahead of $g$ the outcome can respond; we
permit one month of anticipation, for reasons specific to how adoption
is measured (below). An overlap condition ensures that the propensity
score $P(G_i = g \mid X_i)$ is bounded away from zero and one on the
comparison support.

We estimate \eqref{eq:attgt} by the doubly robust (DR) estimator, which
combines a propensity-score weighting and an outcome regression and is
consistent if either is correctly specified
\citep{callaway2021difference}. The not-yet-treated control group widens
the comparison pool relative to the never-treated alternative and is
robust to compositional changes at later horizons. Standard errors are
computed by 1{,}000 multiplier-bootstrap iterations clustered at the
developer level, and uniform confidence bands are reported alongside
pointwise intervals throughout.

A remark on what the contrast identifies. Through the model's lens, the
comparison isolates the \emph{delegation margin}, not access to AI in
general: Generation-1 assistance was universally available throughout
the panel --- ChatGPT and Copilot predate the earliest adoption cohort
by more than two years (Section \ref{sec:background}) --- so
not-yet-treated developers face the same Generation-1 entry threshold
$T^1$, and the estimated contrast is the arrival of the delegation
threshold $T^D$. Symmetrically, comparison developers must not face
$T^D$ under another vendor's name, which is what the competing-agent
screen of Section \ref{sec:data} ensures.

\subsection{First \emph{detectable} use and anticipation}
\label{sec:strategy-anticipation}

The adoption date is the first commit carrying the Claude co-author
trailer --- the first \emph{detectable} use, not necessarily first
access. A developer typically experiments with the tool before her first
co-authored commit is pushed to a public repository: early sessions may
touch private repositories, produce commits without the trailer, or
involve no commits at all. The trailer-based date is therefore a
slightly \emph{late} measure of true adoption, and outcomes can begin
responding shortly before it. Following
\citet{callaway2021difference}, we accommodate this with an anticipation
window of one month: event time $e = -1$ is treated as potentially
affected, and the identifying parallel-trends restriction is imposed
(and tested) on event times $e \leq -2$. Interpreting a bump immediately
before measured adoption as anticipation rather than a trends violation
follows \citet{malani2015interpreting}; Section \ref{sec:results}
reports the corresponding diagnostics and shows the results are
unchanged with a two-month anticipation window.

\subsection{Aggregation: event-study and simple ATTs}

The matrix $\{ATT(g,t)\}$ is high-dimensional. We aggregate along two
dimensions following \citet{callaway2021difference}. The
\emph{event-study aggregation} averages $ATT(g, t)$ across cohorts at
each event time $e = t - g$:
\begin{equation}
ATT(e) = \sum_{g} \omega_g(e)\, ATT\!\left(g,\, g + e\right),
\qquad \omega_g(e) \geq 0,\ \textstyle\sum_g \omega_g(e) = 1,
\label{eq:event-study}
\end{equation}
with weights proportional to cohort size at event time $e$. This
collapses the $(g,t)$ matrix into a dynamic treatment-effect profile,
the central figure of the paper. The \emph{simple aggregation} averages
$ATT(g, t)$ across all post-treatment $(g, t)$ cells with weights
proportional to cohort-period sizes:
\begin{equation}
ATT^{\mathrm{simple}} = \sum_{g, t \geq g} \omega_{g,t} \, ATT(g, t).
\label{eq:simple-att}
\end{equation}
$ATT^{\mathrm{simple}}$ is interpretable as a single-number summary of
the average post-adoption change.

\subsection{Pre-trends}

Identification rests on the conditional parallel-trends assumption.
\citet{callaway2021difference} note that $ATT(g, t)$ for $t < g$ is
identified under the same DR machinery and should be zero in expectation
under the maintained assumption (outside the anticipation window). We
therefore report the event-study profile for $e \in \{-6,\ldots,-2\}$
alongside post-treatment values and test for pre-trend significance
directly. A non-zero pre-period profile at $e \leq -2$ would indicate
that treated and not-yet-treated cohorts followed different trajectories
prior to adoption, undermining the identifying assumption.

\subsection{Specification details}

For each outcome we estimate \eqref{eq:attgt} on the 5{,}346-developer
estimation sample described in Section \ref{sec:data}. The empirical
implementation uses the \texttt{csdid} package
\citep[d2cml-ai v0.4.2 implementation of][]{callaway2021difference} with
the following options: $\texttt{control\_group} = \texttt{notyettreated}$,
$\texttt{anticipation} = 1$, $\texttt{base\_period} = \texttt{varying}$,
$\texttt{est\_method} = \texttt{dr}$, and 1{,}000 multiplier-bootstrap
iterations. Event-study aggregation is reported for $e \in [-6, 10]$;
the simple aggregation pools all post-treatment $(g, t)$ cells.

Robustness checks targeting the mechanical linkage between the
treatment-defining commit and the language outcomes --- exclusion of the
first-Claude language, commit-volume controls, and per-commit rate
outcomes --- are reported in Section \ref{sec:robustness}.
Identification concerns --- the central threat being self-selection into
Claude adoption that coincides with the project shocks we are trying to
measure --- are taken up in Section \ref{sec:discussion} after the main
results are presented.

\section{Results: Programming-Language Portfolio Expansion}
\label{sec:results}

Around first detectable Claude Code use, developers' language portfolios
expand sharply. Table \ref{tab:main} summarizes the event-time estimates
for the four primary language outcomes and the two activity diagnostics;
Figures \ref{fig:event-langs} through \ref{fig:event-entropy} display the
corresponding event-study profiles. We present the language outcomes in
order of importance --- monthly distinct languages, newly-used languages,
cumulative languages, and entropy --- and then report repositories and
commits as supporting diagnostics.

\begin{table}[H]
\caption{Event-time estimates: ATT at event times 0, 1, 2 and simple
aggregation, by outcome.}
\label{tab:main}
\centering
\begin{threeparttable}
\begin{tabular}{lcccc}
\toprule
 & \multicolumn{3}{c}{Event-time ATT} & \\
\cmidrule(lr){2-4}
Outcome & $t=0$ & $t=1$ & $t=2$ & Simple ATT \\
\midrule
Monthly commits & $35.080^{*}$ & $21.046^{*}$ & $12.110^{*}$ & $18.677^{*}$ \\
 & $(2.085)$ & $(2.292)$ & $(2.609)$ & $(2.269)$ \\
Repositories & $1.494^{*}$ & $0.838^{*}$ & $0.449^{*}$ & $0.742^{*}$ \\
 & $(0.058)$ & $(0.057)$ & $(0.057)$ & $(0.057)$ \\
Programming languages & $2.528^{*}$ & $1.227^{*}$ & $0.693^{*}$ & $1.178^{*}$ \\
 & $(0.063)$ & $(0.064)$ & $(0.067)$ & $(0.052)$ \\
Language entropy & $0.382^{*}$ & $0.189^{*}$ & $0.102^{*}$ & $0.172^{*}$ \\
 & $(0.009)$ & $(0.011)$ & $(0.012)$ & $(0.010)$ \\
Newly-used languages & $1.193^{*}$ & $0.126^{*}$ & $-0.018$ & $0.294^{*}$ \\
 & $(0.051)$ & $(0.034)$ & $(0.038)$ & $(0.030)$ \\
Cumulative languages & $1.604^{*}$ & $1.892^{*}$ & $2.072^{*}$ & $1.930^{*}$ \\
 & $(0.054)$ & $(0.063)$ & $(0.086)$ & $(0.068)$ \\
\midrule
Developers & \multicolumn{4}{c}{5,346\;(\;2,813\;treated,\;2,533\;control)} \\
\bottomrule
\end{tabular}
\begin{tablenotes}\footnotesize
\item \emph{Notes.} Point estimates from the
\citet{callaway2021difference} doubly robust estimator with
not-yet-treated controls, one-month anticipation, and varying base
period. Outcomes are constructed from commit-level records; language
variables are defined over the programming-language files changed by
non-merge commits (Section \ref{sec:data-outcomes}). Bootstrap standard
errors in parentheses, clustered by developer (1{,}000 iterations).
$^{*}$ denotes $|t|>1.96$. ``Simple ATT'' is the cross-cohort,
cross-period average from Equation \eqref{eq:simple-att}.
\end{tablenotes}
\end{threeparttable}
\end{table}

\subsection{Monthly distinct languages}

The number of distinct programming languages a developer works in rises
by 2.53 (SE 0.06) in the adoption month, against a pre-adoption treated
mean of 0.90 --- the active language set roughly triples on impact. The
effect does not vanish with the adoption event: it remains at 1.23
(SE 0.06) one month later and 0.69 (SE 0.07) two months out, both
statistically significant, before decaying toward zero by the second
half of the first post-adoption year (Figure \ref{fig:event-langs});
point estimates at horizons beyond $e=4$ remain positive but are
imprecise, as fewer cohorts contribute. The simple aggregated ATT is
1.18 (SE 0.05) additional active languages per month. Pre-period
coefficients at $e \leq -2$ are tightly estimated zeros, consistent
with the identifying assumption.

\begin{figure}[H]
\caption{Event study: number of programming languages.}
\label{fig:event-langs}
\centering
\includegraphics[width=0.8\textwidth]{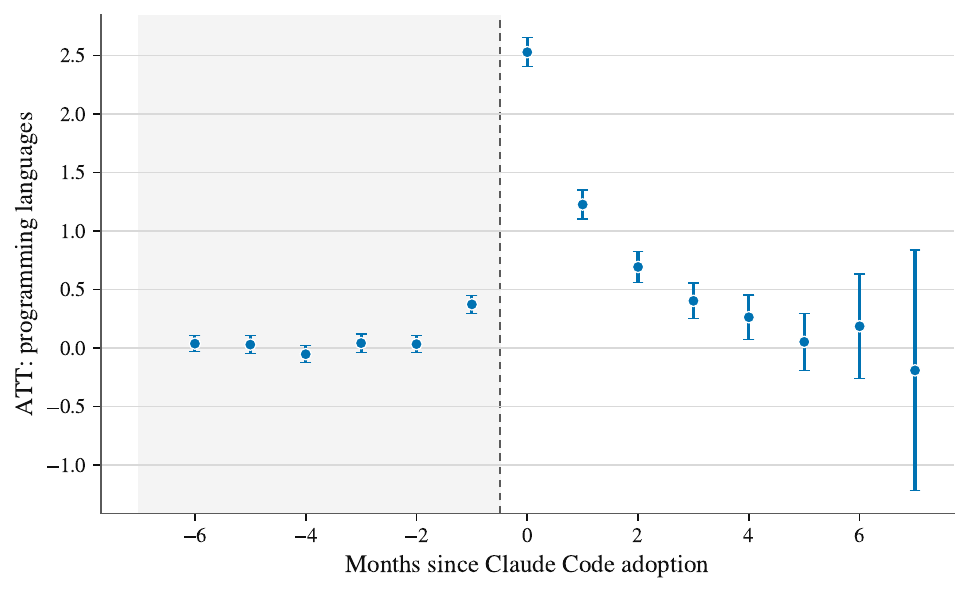}
\end{figure}

\subsection{Newly-used languages}

The most important outcome for the model is the flow of
\emph{newly-used} languages --- languages absent from the developer's
entire observed history. This is the empirical counterpart of the
activation band in Proposition \ref{prop:band}: delegation lowers the
entry threshold for some unfamiliar-language opportunities, and those
first uses should appear as a spike in the first-use flow at adoption.

That is what the data show (Figure \ref{fig:event-newlangs}). Developers
begin 1.19 (SE 0.05) new languages in the adoption month, nearly four
times the pre-adoption monthly flow of 0.31. Because
$\Delta N^{\text{new}}$ is a flow --- a given language can be ``new''
only once --- the profile necessarily reverts: the effect is 0.13
(SE 0.03) at $e=1$ and statistically indistinguishable from zero
thereafter. The economic content of the spike is therefore not
persistence of the flow itself but the entry events it records; whether
those entries accumulate is the question the stock measure answers next.
Part of the $e=0$ spike is mechanically linked to the
treatment-defining commit --- the first Claude-co-authored commit itself
often introduces a language --- and Section \ref{sec:robustness} shows
that roughly two-thirds of the spike survives excluding each developer's
first-Claude language entirely.

\begin{figure}[H]
\caption{Event study: newly-used languages.}
\label{fig:event-newlangs}
\centering
\includegraphics[width=0.8\textwidth]{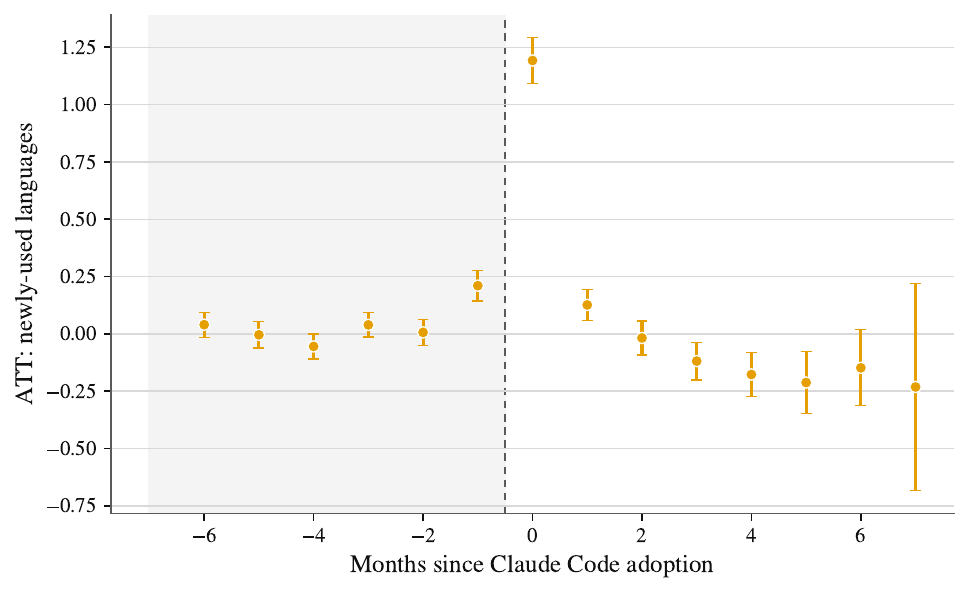}
\end{figure}

\subsection{Cumulative languages}

The cumulative-languages stock --- the running count of distinct
languages ever used --- rises by 1.60 (SE 0.05) at adoption and, unlike
every other outcome, \emph{grows} with event time: 1.89 at $e=1$, 2.07
at $e=2$, with a simple ATT of 1.93 (Figure \ref{fig:event-cumlangs}).
This is the stock dynamic anticipated by Proposition \ref{prop:dynamic}:
first-use events accumulate, so the stock can keep rising even as the
newly-used-language flow reverts.

An important caveat applies. Four of the five pre-period coefficients
for this outcome are statistically significant, indicating that treated
developers were already accumulating languages somewhat faster than
not-yet-treated developers before adoption. Unlike the monthly-flow
outcomes, the cumulative stock therefore does not satisfy the
parallel-trends diagnostic cleanly, and we present its profile as
descriptive evidence on stock dynamics rather than as a headline
estimate. The monthly language count and the newly-used-language flow
--- both with clean pre-periods --- carry the identification weight.

\begin{figure}[H]
\caption{Event study: cumulative languages ever used.}
\label{fig:event-cumlangs}
\centering
\includegraphics[width=0.8\textwidth]{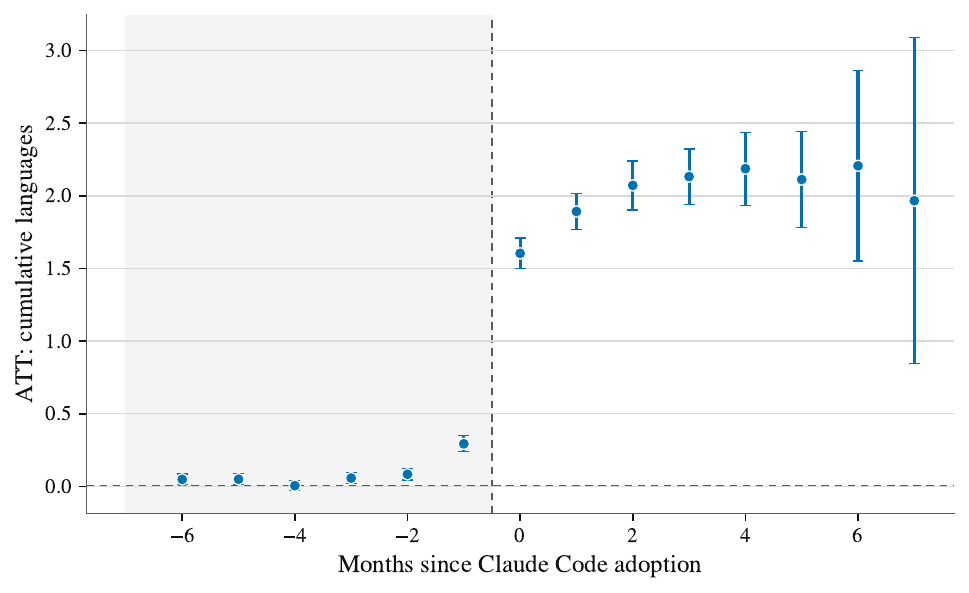}
\end{figure}

\subsection{Language entropy}

Shannon entropy over the month's changed files weights both the count
and the balance of the language mix: it rises only if new languages
carry non-trivial shares of the developer's work. Entropy increases by
0.38 (SE 0.01) at adoption --- against a pre-adoption mean of 0.15 ---
and remains significantly elevated at $e=1$ (0.19) and $e=2$ (0.10),
with a simple ATT of 0.17 (Figure \ref{fig:event-entropy}). The entropy
result indicates that the added languages are not single-file touches:
the distribution of work across languages genuinely broadens. We treat
entropy as a secondary diversity measure, since the model speaks to
which languages enter the portfolio rather than to allocation within
it.

\begin{figure}[H]
\caption{Event study: language entropy (Shannon).}
\label{fig:event-entropy}
\centering
\includegraphics[width=0.8\textwidth]{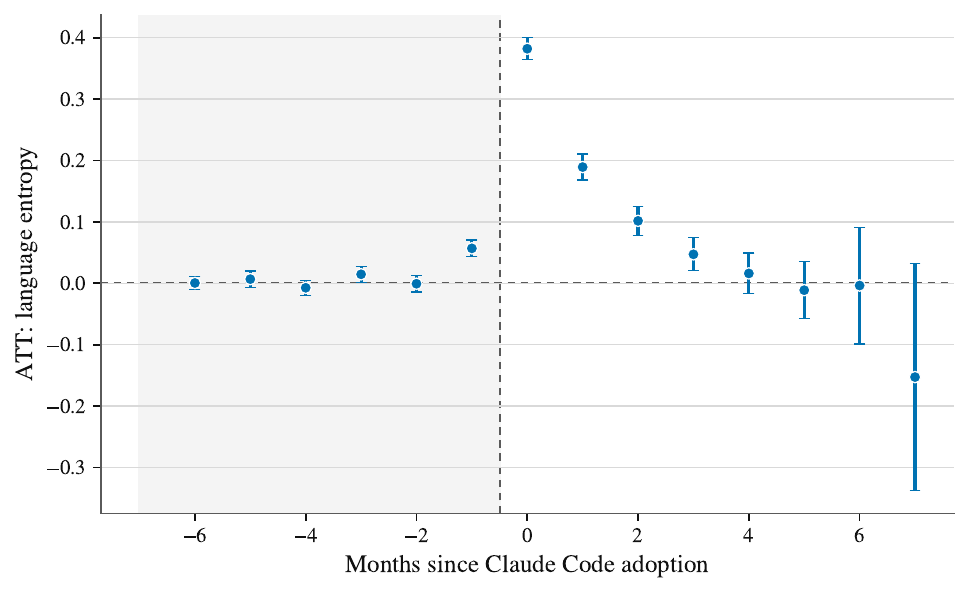}
\end{figure}

\subsection{Diagnostics: repositories and commits}

Repository and commit counts are activity measures, not frontier
measures; we report them to characterize the adoption event and because
they discipline the interpretation of the language results. Treated
developers contribute to 1.49 more distinct repositories (SE 0.06) in
the adoption month --- more than a doubling relative to the
pre-adoption mean of 0.94 --- with a simple ATT of 0.74 (Figure
\ref{fig:event-repos}). Monthly commits rise by 35.1 (SE 2.1) at
adoption against a pre-adoption mean of 10.7, with a simple ATT of 18.7.

The commit result matters for interpretation: adoption is accompanied by
a large increase in the \emph{volume} of activity, so part of any
language-count increase could reflect more commits mechanically touching
more languages. Section \ref{sec:robustness} addresses this directly ---
conditioning on pre-adoption activity leaves the language estimates
unchanged, while per-commit rate outcomes show that the adoption-month
jump includes genuine per-commit diversification but the persistent
component operates through the expanded volume of work spanning more
languages.

\begin{figure}[H]
\caption{Event study: number of repositories (diagnostic).}
\label{fig:event-repos}
\centering
\includegraphics[width=0.8\textwidth]{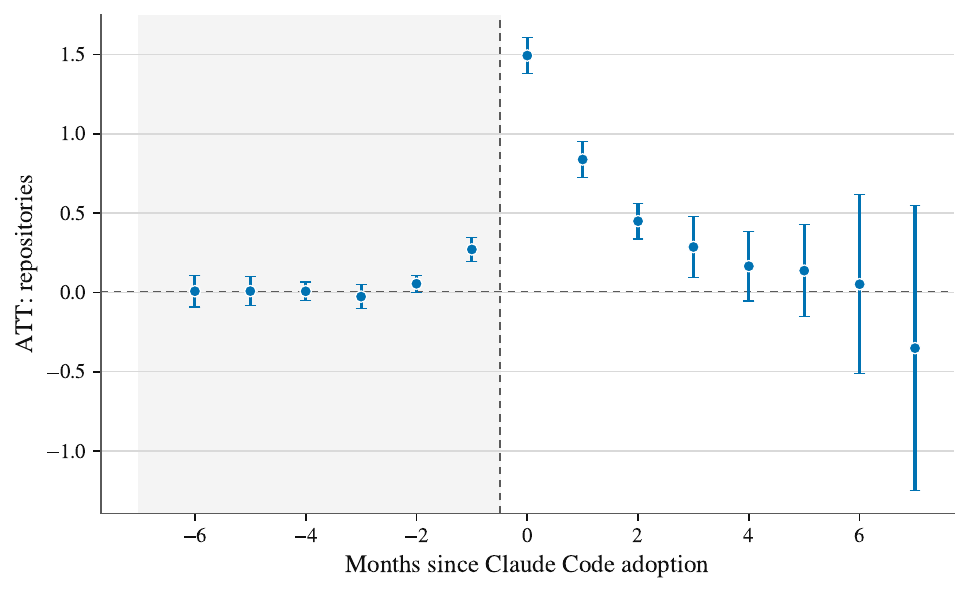}
\end{figure}

\subsection{Pre-trends and anticipation}
\label{sec:results-pretrends}

For the monthly language count, the newly-used-language flow, and
entropy, the pre-period coefficients at $e \in \{-6, \ldots, -2\}$ are
small, statistically insignificant, and exhibit no trend --- visible in
the flat shaded regions of Figures
\ref{fig:event-langs}--\ref{fig:event-entropy} --- supporting the
parallel-trends assumption where it is imposed. At $e = -1$ the language
count shows a positive coefficient of 0.37: the single month inside the
anticipation window is partially treated, exactly as the
first-\emph{detectable}-use measurement predicts (Section
\ref{sec:strategy-anticipation}). Two facts support the anticipation
reading over a trends violation: the bump is confined to $e=-1$ with no
build-up over $e \leq -2$, and it is essentially unchanged when we widen
the anticipation window to two months or re-date adoption using each
developer's earliest Claude commit recovered from the complete
commit-level history. Extending the pre-period window to a full year
leaves the profile flat --- all eleven coefficients over
$e \in \{-12,\ldots,-2\}$ are tightly estimated zeros, with every cohort
contributing at every point (Appendix \ref{app:support}, Figure
\ref{fig:extended-pretrend}); Appendix Table \ref{tab:risksets} reports
the cohorts and developers contributing at each event time. The exception to clean pre-periods is the
cumulative stock, discussed above, which we accordingly do not treat as
identified evidence.

\subsection{Magnitudes in economic terms}

Pooling the results, a representative treated developer in the adoption
month works in roughly 2.5 additional programming languages, begins
about 1.2 languages she has never used in her observed history,
contributes to 1.5 additional repositories, and makes about 35
additional commits. Averaged across the post-adoption window, the
portfolio expansion settles at 1.2 additional active languages per month
--- more than doubling the pre-adoption baseline of 0.90 --- and the
newly-used-language flow averages 0.29 per month above baseline. These
are large movements of the language frontier for individual developers;
whether they reflect the causal effect of agentic delegation or
project-driven selection into adoption is taken up in Sections
\ref{sec:robustness} and \ref{sec:discussion}.

\section{Robustness: Mechanical Exposure and Activity Volume}
\label{sec:robustness}

The central threat to interpreting Section \ref{sec:results} is that
treatment and outcome are \emph{mechanically} linked: adoption is dated
by a commit, and commits carry languages. The adoption-month estimate
could in principle reflect nothing more than the treatment-defining
commit itself introducing a language, amplified by the surge in commit
volume that accompanies adoption. This section addresses the mechanical
channel directly with four families of checks: excluding the language
introduced by the first Claude commit, excluding \emph{all}
Claude-co-authored commits from outcome construction, controlling for
activity volume, and re-estimating under stricter sample and timing
assumptions.

\subsection{Excluding the first-Claude language}
\label{sec:robust-purge}

The mechanical channel is real and quantitatively important to rule
out: 77 percent of first Claude-co-authored commits touch at least one
programming language (1.78 languages on average), so the treatment
event directly deposits languages into the adoption-month outcome. We
therefore rebuild every language outcome after deleting, for each
developer, \emph{all} activity in the language(s) her first Claude
commit introduced --- not only at $e=0$ but in every month, so the
purged outcomes cannot register that language at any horizon. This
removes 7{,}529 (developer, language) pairs.

\begin{table}[H]
\caption{Mechanical-exposure robustness: baseline vs.\ first-Claude
language excluded.}
\label{tab:purge}
\centering
\begin{threeparttable}
\begin{tabular}{lcccccc}
\toprule
 & \multicolumn{3}{c}{Baseline} & \multicolumn{3}{c}{First-Claude language excluded} \\
\cmidrule(lr){2-4}\cmidrule(lr){5-7}
Outcome & $t=0$ & $t=1$ & $t=2$ & $t=0$ & $t=1$ & $t=2$ \\
\midrule
Programming languages & $2.528^{*}$ & $1.227^{*}$ & $0.693^{*}$ & $1.580^{*}$ & $0.898^{*}$ & $0.545^{*}$ \\
 & $(0.063)$ & $(0.064)$ & $(0.067)$ & $(0.055)$ & $(0.053)$ & $(0.060)$ \\
Language entropy & $0.382^{*}$ & $0.189^{*}$ & $0.102^{*}$ & $0.249^{*}$ & $0.138^{*}$ & $0.084^{*}$ \\
 & $(0.009)$ & $(0.011)$ & $(0.012)$ & $(0.009)$ & $(0.010)$ & $(0.010)$ \\
Newly-used languages & $1.193^{*}$ & $0.126^{*}$ & $-0.018$ & $0.807^{*}$ & $0.184^{*}$ & $0.053$ \\
 & $(0.051)$ & $(0.034)$ & $(0.038)$ & $(0.041)$ & $(0.032)$ & $(0.036)$ \\
\bottomrule
\end{tabular}
\begin{tablenotes}\footnotesize
\item \emph{Notes.} Each panel reports event-study ATTs from a separate
\citet{callaway2021difference} estimation. In the right panel, all
activity in the language(s) contained in each developer's first
Claude-co-authored commit is removed from outcome construction in every
month. Bootstrap standard errors in parentheses, clustered by developer
(1{,}000 iterations). $^{*}$ denotes $|t|>1.96$.
\end{tablenotes}
\end{threeparttable}
\end{table}

Table \ref{tab:purge} shows the estimates survive. The monthly language
count still rises by 1.58 (SE 0.06) at adoption, remains at 0.90 one
month out and 0.55 two months out, all statistically significant. The
newly-used-language spike falls from 1.19 to 0.81 --- so roughly
one-third of the adoption-month first-use flow is attributable to the
treatment-defining language and two-thirds is expansion \emph{beyond}
it. Entropy behaves identically. Figure \ref{fig:purge} overlays the
baseline and purged event-study profiles for the language count: the
two paths differ at $e \in \{0, 1\}$ --- the mechanical component ---
and converge from $e=2$ onward. The sustained portion of the effect is
therefore not the mechanically introduced language.

\begin{figure}[H]
\caption{Event study: languages, baseline vs.\ first-Claude language
excluded.}
\label{fig:purge}
\centering
\includegraphics[width=0.8\textwidth]{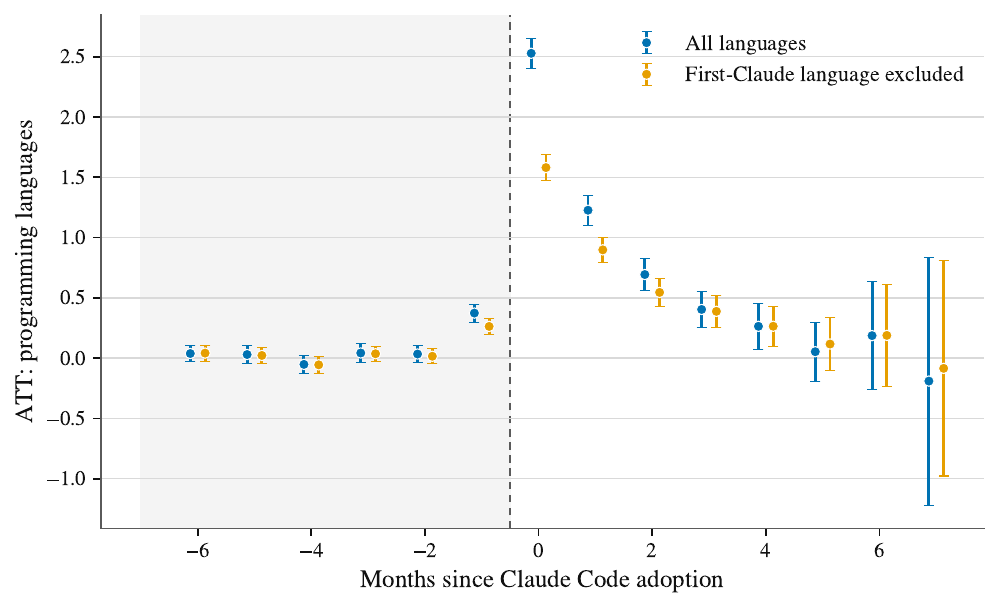}
\end{figure}

\subsection{Decomposing assisted and unassisted production}
\label{sec:robust-decomp}

The strictest version of the mechanical-exposure concern is that the
outcomes simply count the agent's own output: delete the
Claude-co-authored commits, on this reading, and the developer's
portfolio is unchanged. The model takes no offense at the premise ---
its frontier object is the production portfolio, delegated work
included (Remark \ref{rem:production-frontier}) --- but the decomposition
is informative about mechanism. We therefore rebuild every outcome using
\emph{only} commits without the Claude co-author trailer: the
``unassisted portfolio.'' Treatment timing, sample, and estimator are
unchanged; the treatment-defining commits and all other assisted work
are simply removed from outcome construction.

\begin{table}[H]
\caption{Decomposition: all production vs.\ unassisted production only.}
\label{tab:decomp}
\centering
\begin{threeparttable}
\begin{tabular}{lcccccc}
\toprule
 & \multicolumn{3}{c}{All production} & \multicolumn{3}{c}{Unassisted production only} \\
\cmidrule(lr){2-4}\cmidrule(lr){5-7}
Outcome & $t=0$ & $t=1$ & $t=2$ & $t=0$ & $t=1$ & $t=2$ \\
\midrule
Programming languages & $2.528^{*}$ & $1.227^{*}$ & $0.693^{*}$ & $1.663^{*}$ & $0.685^{*}$ & $0.285^{*}$ \\
 & $(0.063)$ & $(0.064)$ & $(0.067)$ & $(0.061)$ & $(0.058)$ & $(0.059)$ \\
Language entropy & $0.382^{*}$ & $0.189^{*}$ & $0.102^{*}$ & $0.268^{*}$ & $0.120^{*}$ & $0.049^{*}$ \\
 & $(0.009)$ & $(0.011)$ & $(0.012)$ & $(0.009)$ & $(0.010)$ & $(0.011)$ \\
Newly-used languages & $1.193^{*}$ & $0.126^{*}$ & $-0.018$ & $0.724^{*}$ & $0.054$ & $-0.077^{*}$ \\
 & $(0.051)$ & $(0.034)$ & $(0.038)$ & $(0.042)$ & $(0.036)$ & $(0.035)$ \\
Cumulative languages & $1.604^{*}$ & $1.892^{*}$ & $2.072^{*}$ & $1.130^{*}$ & $1.331^{*}$ & $1.434^{*}$ \\
 & $(0.054)$ & $(0.063)$ & $(0.086)$ & $(0.048)$ & $(0.058)$ & $(0.072)$ \\
\bottomrule
\end{tabular}
\begin{tablenotes}\footnotesize
\item \emph{Notes.} The left panel reproduces the main estimates
(outcomes over all commits). In the right panel, every outcome is
rebuilt from commits \emph{without} the Claude co-author trailer, so no
Claude-co-authored work enters outcome construction at any horizon.
Treatment timing, sample (5{,}346 developers), and estimator are
identical across panels. Bootstrap standard errors in parentheses,
clustered by developer (1{,}000 iterations). $^{*}$ denotes $|t|>1.96$.
\end{tablenotes}
\end{threeparttable}
\end{table}

Table \ref{tab:decomp} shows the expansion is not confined to
agent-co-authored output. Restricting to unassisted commits, the monthly
language count still rises by 1.66 (SE 0.06) at adoption and remains
significantly elevated at $e=1$ (0.68) and $e=2$ (0.28); entropy and the
newly-used-language flow behave in parallel (0.27 and 0.72 at $e=0$,
both significant). Unassisted commit volume itself rises by 18.3 at
adoption --- developers do more of their own work, in more languages,
around adoption. Because a commit without the trailer is not guaranteed
to be free of AI assistance (trailers can be stripped by history
rewrites), the unassisted panel should be read as an approximation; but
the pattern is consistent with developers personally engaging with the
languages of newly entered projects --- editing, fixing, and extending
code the agent produced --- and, at later horizons, with the learning
extension of Appendix \ref{app:proofs}. The decomposition rules out the
sharpest deflationary reading: the portfolio expansion is not an
artifact of counting machine output, since it survives with every
Claude-co-authored commit deleted.

\subsection{Activity volume}
\label{sec:robust-volume}

Adoption raises monthly commits by roughly 35 at $e=0$; more commits
could mechanically touch more languages. We probe the volume channel
from two directions.

First, we condition the doubly robust estimator on each developer's
pre-adoption activity --- log pre-period commits and log pre-period
repositories --- so that treated developers are compared to
not-yet-treated developers with similar baseline volume. Table
\ref{tab:volume} (top panel) shows the language-count profile is
unchanged to the second decimal (2.53 versus 2.53 at $e=0$; 1.23 versus
1.23 at $e=1$), and Figure \ref{fig:volumectrl} shows the two profiles
lie on top of each other at every event time. The estimates are not
driven by selection of high-volume developers into adoption.

\begin{table}[H]
\caption{Volume robustness: baseline-activity controls and per-commit
rates.}
\label{tab:volume}
\centering
\begin{threeparttable}
\begin{tabular}{lccc}
\toprule
Specification & $t=0$ & $t=1$ & $t=2$ \\
\midrule
\multicolumn{4}{l}{\textit{Programming languages}} \\
\quad No controls & $2.528^{*}$ & $1.227^{*}$ & $0.693^{*}$ \\
\quad Baseline volume + repos & $2.530^{*}$ & $1.228^{*}$ & $0.695^{*}$ \\
\multicolumn{4}{l}{\textit{Per-commit rates}} \\
\quad Languages per commit & $0.0546^{*}$ & $0.0045$ & $-0.0092$ \\
\quad New languages per commit & $0.0482^{*}$ & $-0.0257^{*}$ & $-0.0240^{*}$ \\
\bottomrule
\end{tabular}
\begin{tablenotes}\footnotesize
\item \emph{Notes.} Top panel: ATTs for the monthly language count with
and without conditioning the doubly robust estimator on log pre-adoption
commits and repositories. Bottom panel: per-commit rate outcomes
(outcome divided by the month's commit count; zero-commit months coded
zero). Bootstrap standard errors clustered by developer. $^{*}$ denotes
$|t|>1.96$.
\end{tablenotes}
\end{threeparttable}
\end{table}

\begin{figure}[H]
\caption{Event study: languages, with vs.\ without baseline-volume
controls.}
\label{fig:volumectrl}
\centering
\includegraphics[width=0.8\textwidth]{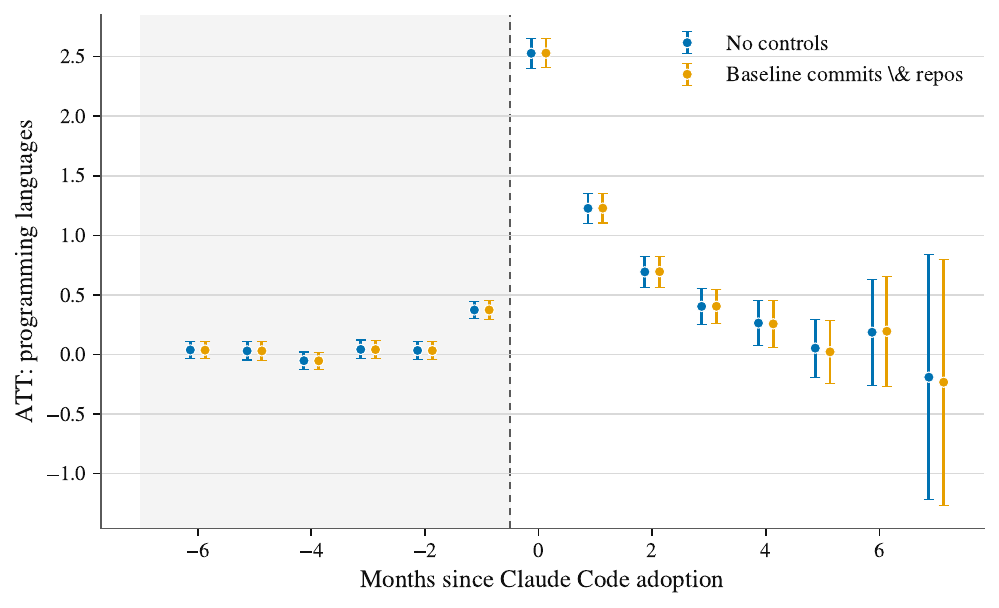}
\end{figure}

Enriching the covariate set with pre-period language count and language
entropy alongside the volume measures again moves nothing (2.54 at
$e=0$ for the language count, against 2.53 unconditioned).

Second, we net out contemporaneous volume entirely by estimating
per-commit \emph{rate} outcomes (Table \ref{tab:volume}, bottom panel;
Figure \ref{fig:rate}). Languages per commit rise by 0.055 (SE 0.013)
in the adoption month --- commits genuinely become more
language-diverse on impact --- but the rate effect disappears at $e=1$
and turns slightly negative thereafter. The two probes together give a
precise decomposition of the mechanism: the adoption month features
per-commit diversification over and above the volume surge, while the
\emph{persistent} elevation in the monthly language count operates
through the expanded volume of work spanning more languages, not
through each commit individually becoming more polyglot. We read this
as economically sensible --- delegation enables more building across a
wider set of languages --- and we are careful in Section
\ref{sec:discussion} not to claim persistent per-commit
diversification.

\begin{figure}[H]
\caption{Event study: languages per commit.}
\label{fig:rate}
\centering
\includegraphics[width=0.8\textwidth]{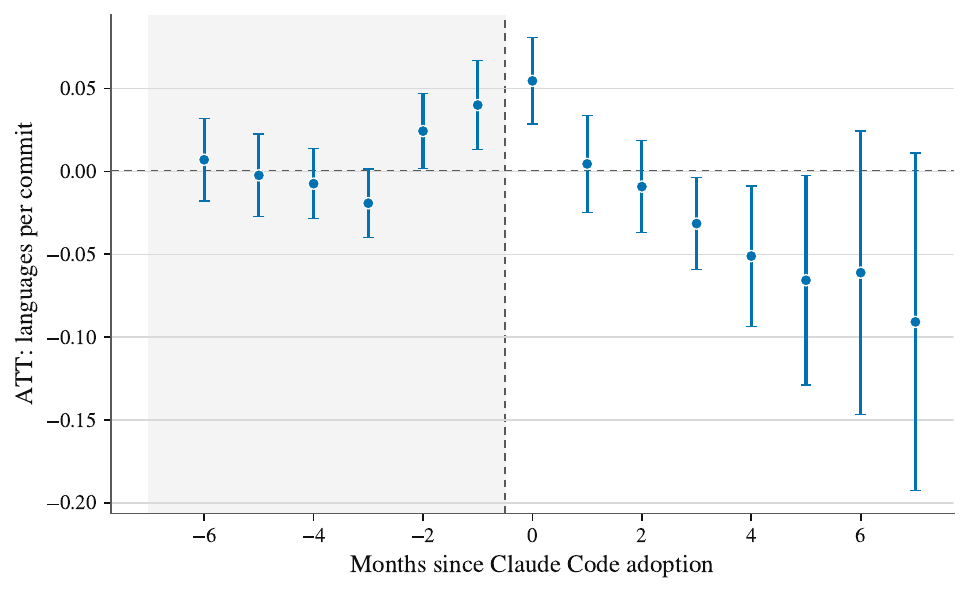}
\end{figure}

\subsection{Stricter pre-period activity filters}
\label{sec:robust-filters}

The main sample admits developers with as little as one pre-adoption
commit. To verify that low-pre-activity developers do not drive the
estimates through level shifts from a near-zero baseline, we re-estimate
under two stricter filters: developers active in at least 50 percent of
their pre-treatment months, and developers active in at least six
pre-treatment months.

\begin{table}[H]
\caption{Robustness: ATT at $t=0$ across sample restrictions.}
\label{tab:robust}
\centering
\begin{threeparttable}
\begin{tabular}{lccc}
\toprule
 & \multicolumn{3}{c}{ATT at adoption ($t=0$)} \\
\cmidrule(lr){2-4}
Outcome & Main & Active $\geq 50\%$ & Active $\geq 6$ months \\
\midrule
Programming languages & $2.528^{*}$ & $2.087^{*}$ & $2.183^{*}$ \\
 & $(0.063)$ & $(0.117)$ & $(0.096)$ \\
Language entropy & $0.382^{*}$ & $0.280^{*}$ & $0.324^{*}$ \\
 & $(0.009)$ & $(0.020)$ & $(0.016)$ \\
Newly-used languages & $1.193^{*}$ & $0.675^{*}$ & $0.757^{*}$ \\
 & $(0.051)$ & $(0.069)$ & $(0.055)$ \\
Cumulative languages & $1.604^{*}$ & $0.683^{*}$ & $0.757^{*}$ \\
 & $(0.054)$ & $(0.076)$ & $(0.062)$ \\
Repositories & $1.494^{*}$ & $1.622^{*}$ & $1.572^{*}$ \\
 & $(0.058)$ & $(0.157)$ & $(0.104)$ \\
Monthly commits & $35.080^{*}$ & $43.770^{*}$ & $36.528^{*}$ \\
 & $(2.085)$ & $(3.475)$ & $(4.109)$ \\
\midrule
Developers & 5,346 & 1,425 & 2,393 \\
\bottomrule
\end{tabular}
\begin{tablenotes}\footnotesize
\item \emph{Notes.} Each cell reports the ATT at event time 0 from a
separate \citet{callaway2021difference} estimation under the indicated
sample restriction. ``Active $\geq 50\%$'' retains developers with
commits in at least half of their pre-treatment months; ``Active $\geq
6$ months'' retains developers with commits in at least six
pre-treatment months. Bootstrap standard errors in parentheses,
clustered by developer. $^{*}$ denotes $|t|>1.96$.
\end{tablenotes}
\end{threeparttable}
\end{table}

Table \ref{tab:robust} shows signs and significance hold uniformly
across the restricted samples. The monthly language count attenuates
only modestly (2.53 in the main specification versus 2.09 and 2.18
under the two filters), and the repository and commit diagnostics are
essentially unchanged. The newly-used-language flow attenuates more
(1.19 versus 0.68 and 0.76), which is expected rather than concerning:
consistently active developers have denser observed histories, so fewer
languages remain that can count as first uses within the window.
Developers with established pre-adoption activity patterns expand their
language portfolios at adoption much as the full sample does. Relaxing
the sample in the other direction --- re-including the 214 developers
excluded by the competing-agent screen of Section \ref{sec:data} ---
yields an adoption-month language ATT of 2.46 (SE 0.06), so the screen
sharpens the estimand but does not drive the estimates.

\subsection{Timing assumptions}
\label{sec:robust-timing}

Finally, the results are insensitive to how the anticipation window and
the adoption date are set. Widening the anticipation window to two
months --- so that the parallel-trends restriction is imposed and tested
on $e \leq -3$ --- leaves the language-count profile essentially
unchanged (ATT at $e=0$ of 2.62 versus 2.53) with a flat pre-period.
Re-dating each developer's adoption to the earliest Claude-co-authored
commit recovered in the complete commit-level history --- which shifts
154 developers to earlier adoption dates than the original
event-stream measure --- likewise yields a nearly identical profile
(ATT at $e=0$ of 2.77). Both checks reinforce the reading of the
$e=-1$ coefficient as detection lag rather than differential
pre-trends (Section \ref{sec:results-pretrends}).

\subsection{Placebo adoption dates}
\label{sec:robust-placebo}

Finally, a falsification test. We assign every developer a fake
adoption date twelve months before her real one, restrict the panel to
months strictly before real adoption --- so that no genuinely treated
observation can contaminate the exercise --- and re-run the estimator.
If the design were picking up differential trends or spurious timing
patterns, the placebo would find them. It finds nothing: the effect at
the fake adoption month is 0.003 for the language count, $-0.019$ for
newly-used languages, and 0.005 for entropy, and not a single
post-placebo coefficient is statistically significant for any of the
three outcomes (zero of seven event times each). The event-time
response is specific to the true adoption date.

\subsection{Summary}

The language-portfolio expansion is not an artifact of the
treatment-defining commit, of counting agent-co-authored output, of
activity volume, of low-activity developers, or of timing assumptions,
and it does not appear at placebo dates. Because Tables
\ref{tab:main}--\ref{tab:decomp} report the $t=1$ and $t=2$
coefficients throughout, inference that excludes the potentially
mechanical adoption month entirely can be read directly off those
columns.
The checks also sharpen the interpretation: about one-third of the
adoption-month first-use spike is the first Claude commit's own
language; roughly two-thirds of the language-count effect appears even
in commits without any Claude co-authorship; and the persistence
reflects an expanded volume of work --- assisted and unassisted ---
spanning more languages.

\section{Mechanism: Specialist and Ability Heterogeneity}
\label{sec:mechanism}

The model's sharpest cross-sectional prediction (Appendix
\ref{app:proofs}, Proposition \ref{prop:specialist}) is that expected
expansion into unfamiliar languages equals $U_i \, p_i(a_i)$: it is
increasing in the stock of unfamiliar-language candidates $U_i$ ---
\emph{specialists}, with narrow pre-adoption portfolios, have the most
headroom --- and in general ability $a_i$, which lowers verification
costs. This heterogeneity test matters beyond the model: a generic
activity shock at adoption would raise output for everyone roughly
proportionally, whereas the delegation mechanism predicts that
\emph{first uses} concentrate among developers with many unfamiliar
candidates.

\subsection{Design}

We classify developers by two pre-treatment characteristics, each
measured over the developer's own pre-adoption window. \emph{Ability}
is proxied by pre-adoption commit volume; \emph{breadth} is the count
of distinct programming languages in pre-adoption commits. Observed
breadth is mechanically contaminated by volume --- a developer with five
pre-adoption commits looks narrow simply because we barely observe her
--- so we double-sort: developers are first split at the median of
pre-adoption volume, and \emph{specialist} is then defined as
below-median breadth \emph{within} each ability half. This breaks the
mechanical correlation and yields four cells of 1{,}174--1{,}494
developers. We estimate the full Callaway--Sant'Anna specification
separately within each cell, so specialists are compared only to
not-yet-treated specialists of the same ability level.

\begin{table}[H]
\caption{Heterogeneity: ATTs by pre-adoption ability and breadth.}
\label{tab:specialist}
\centering
\begin{threeparttable}
\begin{tabular}{lcccc}
\toprule
 & \multicolumn{2}{c}{High ability} & \multicolumn{2}{c}{Low ability} \\
\cmidrule(lr){2-3}\cmidrule(lr){4-5}
 & Specialist & Generalist & Specialist & Generalist \\
\midrule
\multicolumn{5}{l}{\textit{Programming languages}} \\
\quad $t=0$ & $2.218^{*}$ & $2.149^{*}$ & $2.864^{*}$ & $2.885^{*}$ \\
 & $(0.107)$ & $(0.173)$ & $(0.095)$ & $(0.116)$ \\
\quad $t=1$ & $1.068^{*}$ & $1.025^{*}$ & $1.547^{*}$ & $1.254^{*}$ \\
 & $(0.102)$ & $(0.183)$ & $(0.092)$ & $(0.128)$ \\
\quad $t=2$ & $0.628^{*}$ & $0.176$ & $1.075^{*}$ & $0.857^{*}$ \\
 & $(0.106)$ & $(0.183)$ & $(0.091)$ & $(0.149)$ \\
\quad Simple ATT & $1.027^{*}$ & $0.735^{*}$ & $1.580^{*}$ & $1.355^{*}$ \\
 & $(0.087)$ & $(0.169)$ & $(0.063)$ & $(0.112)$ \\
\multicolumn{5}{l}{\textit{Newly-used languages}} \\
\quad $t=0$ & $0.981^{*}$ & $0.301^{*}$ & $2.388^{*}$ & $1.015^{*}$ \\
 & $(0.084)$ & $(0.100)$ & $(0.099)$ & $(0.092)$ \\
\quad $t=1$ & $0.137^{*}$ & $-0.046$ & $0.410^{*}$ & $-0.019$ \\
 & $(0.047)$ & $(0.099)$ & $(0.054)$ & $(0.083)$ \\
\quad $t=2$ & $0.021$ & $-0.306^{*}$ & $0.282^{*}$ & $-0.102$ \\
 & $(0.055)$ & $(0.095)$ & $(0.050)$ & $(0.103)$ \\
\quad Simple ATT & $0.266^{*}$ & $-0.100$ & $0.846^{*}$ & $0.149^{*}$ \\
 & $(0.044)$ & $(0.088)$ & $(0.036)$ & $(0.074)$ \\
\midrule
Pre-adoption languages & 0.75 & 2.11 & 0.07 & 0.40 \\
Developers & 1,494 & 1,174 & 1,408 & 1,270 \\
\bottomrule
\end{tabular}
\begin{tablenotes}\footnotesize
\item \emph{Notes.} Each column reports a separate
\citet{callaway2021difference} estimation on the indicated cell of the
double sort (median pre-adoption commit volume, then median pre-adoption
language breadth within ability halves). Bootstrap standard errors in
parentheses, clustered by developer (1{,}000 iterations). $^{*}$
denotes $|t|>1.96$. ``Pre-adoption languages'' is the cell's mean
monthly language count over pre-adoption months.
\end{tablenotes}
\end{threeparttable}
\end{table}

\subsection{Results}

Table \ref{tab:specialist} and Figure \ref{fig:specialist} show two
distinct patterns.

\emph{The headroom margin works as predicted.} On newly-used languages
--- the outcome that maps directly to the activation band ---
specialists expand far more than generalists \emph{within both ability
halves}: 0.98 versus 0.30 at $e=0$ among high-ability developers (a
threefold gap), 2.39 versus 1.02 among low-ability developers, with the
same ordering in the simple ATTs (0.27 versus $-0.10$; 0.85 versus
0.15). Because the comparison is within ability halves, this gap cannot
be an artifact of activity volume. The monthly language \emph{count},
by contrast, jumps by similar level amounts across cells (2.1--2.9 at
$e=0$) --- but specialists start from far lower baselines (0.75 versus
2.11 monthly languages among high-ability developers), so their
relative expansion is several times larger. Both facts are what the
delegation model predicts: everyone's production broadens with the
agent, but \emph{first uses of unfamiliar languages} concentrate where
unfamiliar candidates are plentiful.

\emph{The ability margin is not cleanly identified by our proxy.}
Low-volume developers show larger measured effects than high-volume
developers, the opposite of the prediction if pre-adoption volume
proxied verification ability. We read this comparison as confounded
rather than as evidence against the mechanism: volume is primarily a
measure of \emph{observed engagement}, and developers with sparse
observed histories mechanically have more languages that can register
as ``new'' --- the same depletion logic behind the attenuation under
stricter activity filters in Section \ref{sec:robust-filters}. The
within-ability specialist comparison is immune to this by
construction; the cross-ability comparison is not, and better ability
proxies (account age, accepted pull requests, organizational
affiliation) are a priority for the next data iteration.

\begin{figure}[H]
\caption{Event studies by heterogeneity cell: language count (top) and
newly-used languages (bottom).}
\label{fig:specialist}
\centering
\includegraphics[width=0.85\textwidth]{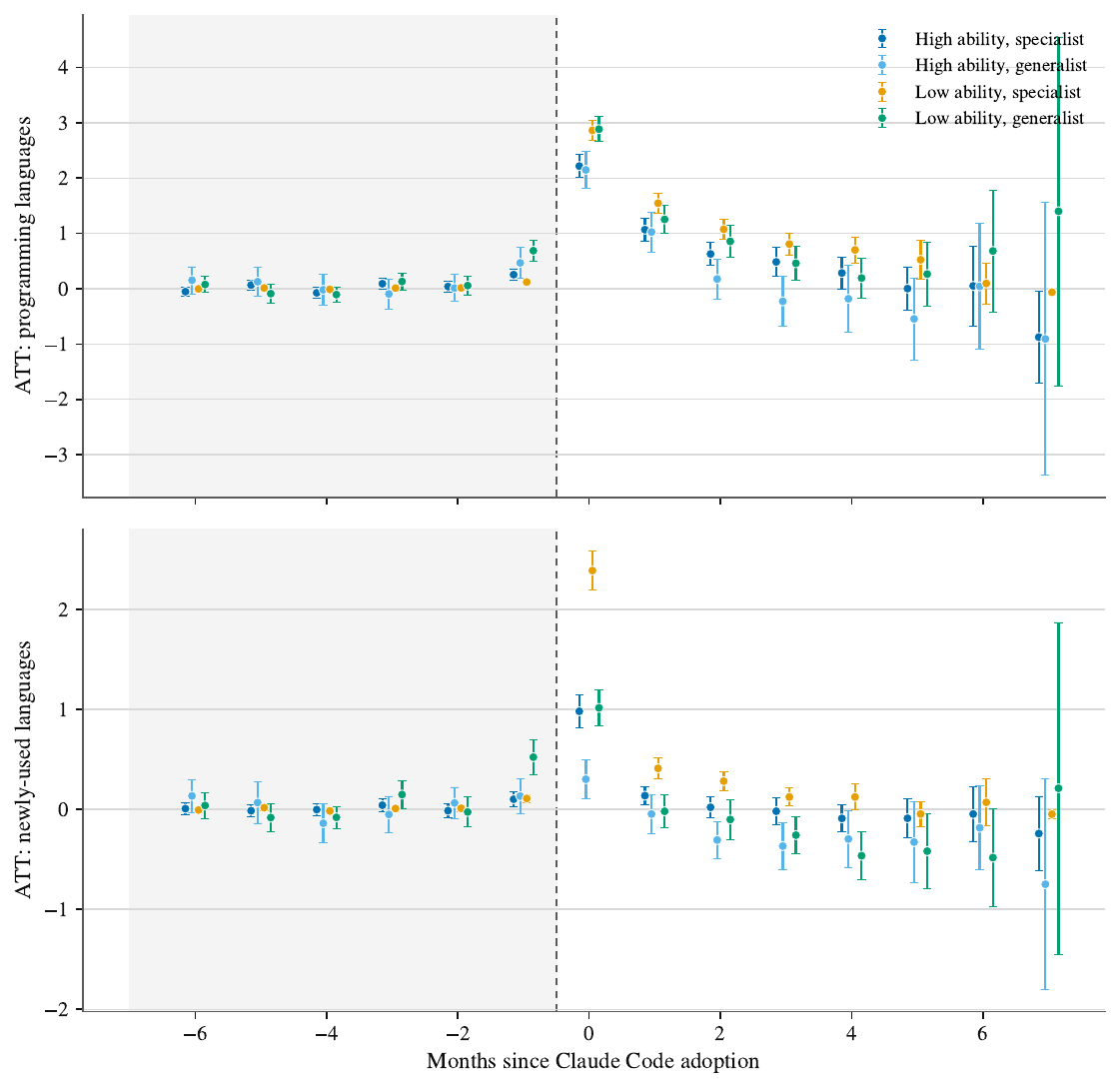}
\end{figure}

The pattern that survives the double sort --- first-use expansion
concentrated among low-breadth developers at every ability level, with
flat pre-trends in all four cells --- is the cross-sectional signature
of delegation into unfamiliar languages, and is difficult to generate
from a uniform activity shock alone.

\section{Discussion: Identification Threats}
\label{sec:discussion}

The estimates in Section \ref{sec:results} are precisely identified
\emph{within} the Callaway--Sant'Anna framework, given the maintained
parallel-trends assumption. The harder question is whether that
assumption is plausible in this setting. We take the threats in turn,
distinguishing those the staggered DiD design addresses from those it
does not.

\subsection{What the staggered DiD design handles}

Three threats are mechanically managed by the estimator. First,
\emph{treatment effect heterogeneity across cohorts}: April 2025
adopters faced a different version of Claude, a different price
schedule, and a different stock of accumulated learning-by-doing than
January 2026 adopters. TWFE specifications can produce sign-flipped
estimates under such heterogeneity \citep{goodmanbacon2021difference};
the cohort-time decomposition in Equation \eqref{eq:attgt} sidesteps
this by estimating cohort-specific effects and aggregating with
non-negative weights. Second, \emph{negative weighting} on
already-treated comparison units: we restrict the comparison group to
the not-yet-treated, ensuring weights are bounded below by zero. Third,
\emph{calendar-time trends}: shocks to GitHub aggregate activity that
hit both groups equally (e.g.\ new platform features, weekend
seasonality, public holidays) are absorbed by the period-specific
$ATT(g, t)$ baseline.

\subsection{What the design does not handle: selection on time-varying
unobservables}

The central threat is that Claude adoption is voluntary and
self-selected. A developer does not adopt Claude at random; she adopts
because something about her work situation has changed. If that
something also affects the outcomes --- the languages she uses, the
repositories she touches --- then the parallel-trends assumption fails
and the estimated ATT confounds the treatment effect with the
selection mechanism.

The most plausible version of this concern is a \emph{reverse-causal}
story at the project level. A developer decides to start a new
project --- perhaps because of a new job, a personal-interest pivot, or
a client requirement --- in a programming language she does not know
well. Because the project is in an unfamiliar language, she installs
Claude Code to assist. Her first Claude commit, our treatment date,
is mechanically simultaneous with the language-portfolio expansion we
measure. In this scenario, Claude does not \emph{cause} the
diversification; both are downstream consequences of the same
upstream decision to start a new project.

The pattern of the $t=0$ coefficient is consistent with this concern.
Across all outcomes the largest effect is the contemporaneous one,
which is exactly what reverse causation predicts: the underlying shock
hits both treatment and outcome at the same moment. Two features of
the data are harder to explain by a one-time project shock alone: the
language count and entropy remain significantly elevated for months
after adoption (including in unassisted commits), and the stock of
distinct languages keeps rising at later horizons (Figure
\ref{fig:event-cumlangs}, Proposition \ref{prop:dynamic}) --- a single
new project delivers its new language once, not a sustained flow. But
the bulk of the $t=0$ effect remains open to the alternative
interpretation.

\subsection{Whose output is it? Joint production}

A distinct objection concerns not selection but attribution: if the
agent writes the code, the measured portfolio might describe the
\emph{tool} rather than the \emph{developer}. Two responses. First, as
a matter of economics, delegated output is joint production, not
autonomous machine output: in the model its value is $\lambda a_i
z(A)$ --- zero for a developer who cannot specify, decompose, and
verify --- and the developer selects which opportunities to take,
carries the verification cost, and ships the result. The frontier we
claim to measure is the developer's \emph{production} frontier
(Remark \ref{rem:production-frontier}), and for that object,
agent-executed work is precisely the margin of interest. Second, as a
matter of data, the decomposition in Section \ref{sec:robust-decomp}
shows the expansion is not even confined to agent-co-authored commits:
roughly two-thirds of the language-count effect appears in the
developer's unassisted commits, which broaden significantly for months
after adoption. The sharpest version of the attribution critique ---
that deleting the agent's commits would leave the developer unchanged
--- is directly rejected by the data.

\subsection{The reversed Ashenfelter dip}

The classical Ashenfelter dip refers to outcomes that fall just before
treatment --- workers' earnings declining in the months before entering
job training, for example. We observe the opposite: monthly commit
activity, language usage, and repository contribution all rise
modestly in the single pre-period month $e = -1$. The
\citet{callaway2021difference} estimator with one-month anticipation
absorbs this rise into the treatment window, but the underlying
pattern suggests that the treatment timing is endogenous: the
developer's overall activity ramps up in the month preceding her first
Claude commit, consistent with the start-of-a-new-project narrative.

\subsection{Robustness specifications and what they rule out}

The checks in Section \ref{sec:robustness} rule out the mechanical
classes of explanation: the effect survives excluding the first-Claude
language, excluding all Claude-co-authored commits, conditioning on
baseline activity, and stricter pre-period activity filters. What
those checks cannot rule out is the project-level selection story
above, in which the same upstream shock drives both adoption and
diversification. That concern is about \emph{why} adoption happens
when it does, not about how the outcomes are constructed, and no
outcome-side scrub addresses it.

\subsection{What would be needed for a causal claim at top venues}

Of the three standard paths toward stronger identification, two are
implemented in this paper and one remains open. \emph{Placebo
treatments}: assigning fake adoption dates twelve months early yields
precise zeros on every outcome (Section \ref{sec:robust-placebo}), so
the design is not picking up spurious timing patterns.
\emph{Conditional parallel trends} with richer covariates: conditioning
the estimator on pre-period commit history, repositories, language
count, and entropy leaves the estimates unchanged (Section
\ref{sec:robust-volume}), so selection on these observables does not
drive the results. What remains open --- and what neither check can
deliver --- is an \emph{exogenous source of variation} in Claude
adoption: regional rollouts of the free tier, pricing-change events, or
eligibility cutoffs in the institutional subscription program would
permit instrumental-variables or regression-discontinuity designs, and
would settle the project-level selection story that outcome-side and
timing-side checks cannot.

\subsection{Honest framing}

In the absence of an exogenous shock, the most defensible
interpretation of the results is associational rather than causal.
The estimates document a \emph{sharp, persistent shift in developer
behavior coincident with Claude Code adoption, in directions
quantitatively consistent with the agentic-delegation mechanism
formalized in Section \ref{sec:theory}}. The pattern is robust to
sample-restriction perturbations and is internally consistent with
the model's prediction of growing dynamic effects on cumulative
language diversity. Whether the underlying mechanism is the
delegated-execution channel of Proposition \ref{prop:frontier} or a
self-selection process that drives both adoption and outcomes will
require the identification work outlined above to resolve.

\section{Conclusion}
\label{sec:conclusion}

Do agentic coding assistants expand the set of programming languages a
developer uses? This paper takes one of the first observational looks at
that question, distinguishing a first generation of AI assistance that
augments work in languages the developer already knows from a second
generation that adds delegated execution. In the model, delegation
lowers the entry threshold for unfamiliar-language opportunities,
activating a band of languages that would not have been used under solo
or conversational production. In a monthly panel of 5{,}346 GitHub
developers with language measured at the commit level from 57 million
changed files, the language portfolio expands sharply around first
detectable Claude Code use: the number of programming languages worked
in rises by 2.5 in the adoption month against a pre-adoption mean of
0.9 and remains elevated for months; developers begin 1.2 languages
absent from their entire observed history; language entropy rises by
0.38; and the cumulative stock of languages keeps growing after
adoption, the stock-flow pattern the dynamic model predicts.
Repository and commit activity also jump, but these are diagnostics of
engagement, not the frontier outcomes.

The expansion is not a bookkeeping artifact. It survives removing the
language introduced by the treatment-defining commit; it survives
rebuilding every outcome from commits without any Claude co-authorship,
where roughly two-thirds of the effect remains; it is unchanged by
conditioning on baseline activity; and it holds among consistently
active developers and after screening out users of competing agentic
tools. The model's sharpest cross-sectional prediction is borne out as
well: first uses of unfamiliar languages concentrate among developers
with narrow pre-adoption portfolios at every ability level --- the
headroom signature of delegation into unfamiliar languages, and a
pattern a uniform activity shock would not produce.

The frontier that moves is a \emph{production} frontier. We do not
claim that adoption teaches a Python developer to write Rust
unassisted; we document that she now ships work in languages she never
touched before, by directing and verifying an agent --- and that her own
unassisted commits broaden too. If these effects are causal, agentic AI
is not merely a productivity multiplier inside a developer's existing
technological boundary but a structural shift in the boundary itself:
an expansion of the horizontal frontier of the individual worker.

We are explicit about what the design cannot settle. Adoption is
voluntary, and its timing is plausibly correlated with decisions to
start projects in unfamiliar languages; the staggered estimator handles
cohort heterogeneity, and the outcome-side checks rule out the
mechanical explanations, but no outcome-side check resolves
project-driven selection into adoption timing. The most defensible
reading is therefore associational: a sharp, robust, dynamic event-time
shift, quantitatively consistent with the agentic-delegation mechanism
of Section \ref{sec:theory}. A definitive causal claim requires
exogenous variation in adoption --- regional rollouts of free-tier
eligibility, pricing-change events, or institutional-subscription
cutoffs --- which we view as the priority for future work.

Several extensions follow directly. Better ability measures --- account
age, accepted pull requests, organizational affiliation --- would
sharpen the heterogeneity analysis, whose ability margin our
volume-based proxy cannot cleanly identify. The order in which
unfamiliar languages are tried --- whether developers expand to
\emph{adjacent} languages first --- maps naturally into sequential
search over a language-similarity graph. And separating delegation from
skill accumulation --- whether the agent remains the executor or the
developer eventually internalizes the language --- requires longer
horizons and content-level measures of who writes what. A separate
strand extends the framework from languages to industry sectors, asking
whether AI-driven broadening enables cross-sector mobility; companion
work \citep{conti2024peer} examines the peer-influence dimension on
collaboration networks.

If AI tools let individual workers operate in domains previously
outside their core human capital, the labor-market consequences of AI
may be less about substitution of capital for labor than about
reallocation of labor across tasks \emph{within} individuals: the
returns to language-specific capital fall, the returns to general
specification-and-verification ability rise, and the matching of
workers to projects changes. The current evidence does not settle the
causal claim, but it documents a large, robust event-time shift that is
difficult to ignore --- and provides a model and an empirical agenda
for identifying this margin cleanly.

\clearpage
\small
\bibliography{references}

\clearpage
\normalsize
\begin{appendices}
\section{Theory Appendix}
\label{app:proofs}

This appendix collects the derivations and proofs behind the propositions
in Section \ref{sec:theory}, states the specialist-heterogeneity and
learning extensions referenced there, and records the repository-expansion
result.

\subsection{CARA--Normal certainty equivalent}

Let a payoff $Y$ be conditionally Normal with mean $m$ and variance
$\sigma^2$, and let utility be $u(y)=-\exp(-\rho y)$ with $\rho>0$. The moment
generating function of the Normal distribution gives
\begin{equation}
\E[-\exp(-\rho Y)]
=-\exp\!\left(-\rho m+\frac{\rho^2\sigma^2}{2}\right).
\end{equation}
The certainty equivalent $\operatorname{CE}$ solves
$-\exp(-\rho\operatorname{CE})=\E[-\exp(-\rho Y)]$, hence
\begin{equation}
\operatorname{CE}
=m-\frac{\rho\sigma^2}{2}.
\label{eq:ce-formula-app}
\end{equation}
This is the mean-minus-risk-penalty expression used in the three production
modes.

\subsection{Threshold algebra}

Starting from the solo surplus
\[
V^S=\omega+s\mu-\frac{\rho s^2}{2\pi}-b,
\]
solo production is viable iff
\begin{equation}
\omega\ge T^S=b-s\mu+\frac{\rho s^2}{2\pi}.
\end{equation}
Generation-1 surplus is $V^C=V^S+\gamma s-r_C$, so its threshold is
\begin{equation}
T^C=b-s\mu-\gamma s+r_C+\frac{\rho s^2}{2\pi}
=T^S-(\gamma s-r_C).
\end{equation}
Because Generation 1 offers either solo or co-pilot production,
\begin{equation}
T^1=\min\{T^S,T^C\}=T^S-\max\{0,\gamma s-r_C\}.
\end{equation}
For an unfamiliar language, Assumption \ref{ass:foothold} implies
$\gamma s-r_C\le0$, so $T^1=T^S$.

Delegation is viable iff
\begin{align}
\omega\ge T^D
={}&
b-(1-\lambda)s\mu-\lambda a z(A)
+\kappa(a,s)+r_D
\nonumber\\
&\quad
+\frac{\rho}{2}
\left[
\frac{(1-\lambda)^2s^2}{\pi}
+\sigma_D^2(a,s,A)
\right].
\end{align}
For an unfamiliar language, the delegation advantage is $B=T^S-T^D$:
\begin{align}
B
={}&
\left(b-s\mu+\frac{\rho s^2}{2\pi}\right)
\nonumber\\
&-
\left\{
b-(1-\lambda)s\mu-\lambda a z(A)+\kappa(a,s)+r_D
+\frac{\rho}{2}
\left[
\frac{(1-\lambda)^2s^2}{\pi}
+\sigma_D^2(a,s,A)
\right]
\right\}
\nonumber\\
={}&
\lambda[a z(A)-s\mu]-\kappa(a,s)-r_D
+\frac{\rho}{2}
\left[
\frac{(2\lambda-\lambda^2)s^2}{\pi}
-\sigma_D^2(a,s,A)
\right],
\end{align}
which is Equation \eqref{eq:agentic-advantage} in the main text.

\subsection{Proof of Proposition \ref{prop:frontier}}

Because $\mathcal{M}_1=\{S,C\}\subset\{S,C,D\}=\mathcal{M}_2$,
\begin{equation}
V^2_{ik,t}
=\max\{V^S_{ik,t},V^C_{ik,t},V^D_{ik,t}\}
\ge
\max\{V^S_{ik,t},V^C_{ik,t}\}
=V^1_{ik,t}.
\end{equation}
If $V^1_{ik,t}\ge0$, then $V^2_{ik,t}\ge0$, so any language active under
Generation 1 remains active under Generation 2. If $V^1_{ik,t}<0$, then
$Z^1_{ik,t}=0$ and $Z^2_{ik,t}\in\{0,1\}$, so $Z^2_{ik,t}\ge Z^1_{ik,t}$.
Summing over languages gives $N^2_{it}\ge N^1_{it}$. Strict inequality holds
when at least one language is inactive under both old modes but active under
delegation. \hfill $\square$

\subsection{Proof of Proposition \ref{prop:band}}

For an unfamiliar language, Assumption \ref{ass:foothold} gives $T^1=T^S$.
If $B=T^S-T^D>0$, then $T^D<T^S$ and the Generation-2 threshold is $T^2=T^D$.
Thus
\[
Z^1=\1{\omega\ge T^S},
\qquad
Z^2=\1{\omega\ge T^D}.
\]
With $T^D<T^S$, these indicators differ exactly when
$T^D\le\omega<T^S$. Therefore
\[
Z^2-Z^1=\1{T^D\le\omega<T^S}.
\]
If $F$ is the conditional CDF of $\omega$, the probability of the band is
$F(T^S)-F(T^D)$. \hfill $\square$

\subsection{Specialist and ability heterogeneity}

The heterogeneity extension referenced in Section \ref{sec:theory} rests
on a symmetry assumption across unfamiliar-language candidates.

\begin{assumption}[Comparable unfamiliar-language candidates]
\label{ass:exchangeable}
Conditional on developer characteristics such as general ability, all
unfamiliar languages have the same per-language activation increment
$p_i(a_i,A)\ge0$, and $p_i$ is increasing in $a_i$ whenever Assumption
\ref{ass:verification} makes delegation thresholds fall with ability.
\end{assumption}

\begin{proposition}[Specialist and ability heterogeneity]
\label{prop:specialist}
Under Assumption \ref{ass:exchangeable}, expected expansion into initially
unfamiliar languages is
\begin{equation}
\E[E_i\mid a_i,U_i]=U_i p_i(a_i,A),
\qquad
E_i\equiv\sum_{k\in\mathcal{U}_i}(Z^2_{ik}-Z^1_{ik}).
\label{eq:specialist-expansion}
\end{equation}
It is increasing in the stock of unfamiliar-language candidates $U_i$ and
in general ability $a_i$. The largest extensive-margin gains accrue to
high-ability specialists.
\end{proposition}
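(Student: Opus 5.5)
The plan is to reduce the statement to Proposition \ref{prop:band} applied language by language, then use linearity of expectation together with the symmetry in Assumption \ref{ass:exchangeable}.

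\emph{Step 1: the per-language increment.} Fix developer $i$ and condition on $(a_i,U_i)$. For each $k\in\mathcal{U}_i$, Proposition \ref{prop:frontier} gives $Z^2_{ik}-Z^1_{ik}\in\{0,1\}$. When $B_{ik}>0$, Proposition \ref{prop:band} gives
\[
\E[Z^2_{ik}-Z^1_{ik}\mid a_i]=F_{ik}(T^S_{ik})-F_{ik}(T^D_{ik}).
\]
When $B_{ik}\le0$, the increment is zero because $T^2=T^1$. In both cases the expectation equals $F_{ik}(T^1_{ik})-F_{ik}(T^2_{ik})\ge0$, the summand of Equation \eqref{eq:expected-count}. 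Assumption \ref{ass:exchangeable} says exactly that this common value is $p_i(a_i,A)$ for every unfamiliar $k$.

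\emph{Step 2: aggregation.} Summing over $k\in\mathcal{U}_i$ and using linearity of expectation gives Equation \eqref{eq:specialist-expansion}:
\[
\E[E_i\mid a_i,U_i]=\sum_{k\in\mathcal{U}_i}p_i(a_i,A)=U_i\,p_i(a_i,A).
\]
This step needs no independence across languages, only additivity.

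\emph{Step 3: comparative statics.} Monotonicity in $U_i$ is immediate, since $p_i\ge0$. For ability, differentiate the increment $F(T^S)-F(T^D)$ in $a$. The solo threshold $T^S$ does not depend on $a$. From Equation \eqref{eq:delegate-threshold},
\[
\partial T^D/\partial a=-\lambda z(A)+\kappa_a+\tfrac{\rho}{2}\,\partial\sigma_D^2/\partial a<0,
\]
using $z(A)\ge0$ and Assumption \ref{ass:verification}. Hence $-F(T^D)$ is weakly increasing in $a$, which gives the monotonicity that Assumption \ref{ass:exchangeable} posits.

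The ability effect also has an extensive-margin piece. As $a$ rises, $B$ rises, so a language with $B\le0$ can cross into the band regime $B>0$. Continuity of $F$ makes the increment continuous at $B=0$: it starts at zero there and then grows. The last claim of the proposition, that high-ability specialists gain most, follows because $U_i p_i(a_i,A)$ is a product of two nonnegative factors, each weakly increasing in its own argument. It is therefore largest at high $U_i$ (small $|\mathcal{K}_i|$) and high $a_i$.

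\emph{Main obstacle.} The delicate step is showing that Assumption \ref{ass:exchangeable} is consistent with the primitives. Literal equality of $F_{ik}(T^S_{ik})-F_{ik}(T^D_{ik})$ across $k$ requires common $(s,\mu,\pi,b)$ and common $F$ across unfamiliar languages. I would state this as the interpretation of the assumption: unfamiliar languages share skill $\underline s$ and the same prior, which is also what Assumption \ref{ass:foothold} presumes.

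If the CDFs differ across languages, I would weaken the result. Replace $p_i$ by the average increment $\bar p_i=U_i^{-1}\sum_k[F_{ik}(T^1_{ik})-F_{ik}(T^2_{ik})]$, and replace "increasing in $U_i$" by "increasing under addition of candidates," which holds because each added term is nonnegative. Strict monotonicity in $a$ would additionally require $F$ to place positive density at $T^D$, and I would flag that as a side condition.
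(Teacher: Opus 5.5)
Your proposal is correct and follows essentially the paper's route: linearity of expectation over $\mathcal{U}_i$ with the common increment $p_i(a_i,A)$ supplied by Assumption~\ref{ass:exchangeable}, monotonicity in $U_i$ from $p_i\ge0$, and monotonicity in $a_i$ from Assumption~\ref{ass:verification}. You go slightly further than the paper by identifying $p_i$ with $F(T^1)-F(T^2)$ and deriving $\partial T^D/\partial a<0$ explicitly. The paper only asserts that lower verification costs make $p_i$ increasing in $a_i$ when the density at the threshold is positive. This sharpens the paper's argument rather than replacing it.
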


\paragraph{Proof.}
Let $E_i=\sum_{k\in\mathcal{U}_i}(Z^2_{ik}-Z^1_{ik})$. Under Assumption
\ref{ass:exchangeable}, each unfamiliar language has the same activation
increment $p_i(a_i,A)$. Linearity of expectation gives
\begin{equation}
\E[E_i\mid a_i,U_i]
=
\sum_{k\in\mathcal{U}_i}\E[Z^2_{ik}-Z^1_{ik}\mid a_i]
=
U_i p_i(a_i,A).
\end{equation}
Because $U_i=K-|\mathcal{K}_i|$, a smaller familiar-language set mechanically
creates more unfamiliar-language candidates. Assumption \ref{ass:verification}
implies stronger general ability lowers verification costs and weakly lowers
residual error, so $p_i(a_i,A)$ is increasing in $a_i$ when the opportunity
density is positive at the relevant threshold. The product
$U_i p_i(a_i,A)$ is therefore largest for developers with many candidates and
high ability. \hfill $\square$

\subsection{Proof of Proposition \ref{prop:dynamic}}

For an initially unfamiliar language, under generation $g$ the probability of
no first use through horizons $0,\ldots,s$ is
\begin{equation}
\Pr(\mathcal{T}^g_{ik}>s)=(1-p^g_{ik})^{s+1}.
\end{equation}
Therefore the probability of at least one use by horizon $s$ is
\begin{equation}
\Pr(\mathcal{T}^g_{ik}\le s)=1-(1-p^g_{ik})^{s+1}.
\end{equation}
Summing over initially unfamiliar languages and subtracting generations gives
Equation \eqref{eq:cumulative-gap}. If $p^2_{ik}\ge p^1_{ik}$, then
$1-p^2_{ik}\le1-p^1_{ik}$, so each summand is nonnegative.

The first difference of the one-language cumulative gap is
\begin{align}
&\left[(1-p^1)^{s+2}-(1-p^2)^{s+2}\right]
-\left[(1-p^1)^{s+1}-(1-p^2)^{s+1}\right]
\nonumber\\
&\qquad
=
p^2(1-p^2)^{s+1}-p^1(1-p^1)^{s+1}.
\end{align}
Summing over languages, the effect grows between horizons $s$ and $s+1$
if and only if the no-catch-up condition
\begin{equation}
\sum_{k\in\mathcal{U}_i}
\left[
p^2_{ik}(1-p^2_{ik})^{s+1}
-p^1_{ik}(1-p^1_{ik})^{s+1}
\right]\ge0
\label{eq:no-catchup}
\end{equation}
holds. If $p^1=0<p^2$, the first difference is
$p^2(1-p^2)^{s+1}>0$ and the second difference is
$-(p^2)^2(1-p^2)^{s+1}<0$, giving strict increase and concavity.
\hfill $\square$

\subsection{Learning after agentic interaction}

The delegation channel explains why an unfamiliar language can be used
immediately; interaction with the agent and the language can also generate
learning that makes the language persist in the portfolio. Suppose the
developer observes signals
\begin{equation}
x_{\ell}=\theta_{ik}+\varepsilon_{\ell},
\qquad
\varepsilon_{\ell}\sim\N(0,\sigma_{\ell}^2),
\qquad
\ell=1,\ldots,L,
\label{eq:signals}
\end{equation}
conditionally independent given $\theta_{ik}$. Let
$q=\sum_{\ell=1}^L1/\sigma_{\ell}^2$ be the total signal precision and
$\bar{x}_q=(\sum_{\ell}x_\ell/\sigma_\ell^2)/q$ the precision-weighted
signal mean. The posterior kernel is proportional to
\[
\exp\left\{
-\frac{1}{2}\pi(\theta-\mu)^2
-\frac{1}{2}\sum_{\ell=1}^L\frac{(x_\ell-\theta)^2}{\sigma_\ell^2}
\right\},
\]
and collecting terms in $\theta$ gives the Bayesian update
\begin{equation}
\pi'_{ik}=\pi_{ik}+q,
\qquad
\mu'_{ik}=
\frac{\pi_{ik}\mu_{ik}+q\bar{x}_q}{\pi_{ik}+q}.
\label{eq:bayes-update}
\end{equation}
Learning raises precision and updates the productivity belief, so an
initially delegated language can remain active in later months. We use
this channel to interpret persistence and cumulative expansion, but not
as the primary proof of immediate diversification: information can
reveal bad matches as well as good ones.

\subsection{Repository expansion}

\begin{proposition}[Repository expansion]
\label{prop:repos}
Suppose each repository requires at least one programming language and carries
an entry cost that is weakly decreasing when the developer can activate that
language. If agentic delegation weakly expands the active-language set, then
the expected number of repositories the developer can contribute to weakly
increases. It increases strictly when some repositories require languages in
the delegation activation band.
\end{proposition}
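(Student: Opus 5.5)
The plan is to mirror the proof of Proposition \ref{prop:frontier}, lifting the language-level comparison to repositories through a monotone feasibility map. I would formalize each repository $j$ by a required language set $L_j\subseteq\{1,\ldots,K\}$ (nonempty, by hypothesis) and a value $w_j$. Its entry cost $c_j(\mathcal{A})$ depends on the developer's active-language set $\mathcal{A}$ and is weakly decreasing in $\mathcal{A}$ under set inclusion; "weakly decreasing when the developer can activate that language" is read as $c_j(\mathcal{A}')\le c_j(\mathcal{A})$ whenever $\mathcal{A}\subseteq\mathcal{A}'$. Repository $j$ is feasible under generation $g$ when $R^g_j\equiv\1{w_j-c_j(\mathcal{A}^g)\ge0}$, where $\mathcal{A}^g=\{k:Z^g_{k}=1\}$. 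The count of reachable repositories is $R^g=\sum_j R^g_j$.

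\textbf{Step 1 (set inclusion).} Proposition \ref{prop:frontier} gives $Z^2_k\ge Z^1_k$ for every $k$ and every opportunity realization. This is exactly $\mathcal{A}^1\subseteq\mathcal{A}^2$ path by path; this is the hypothesis "delegation weakly expands the active-language set."

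\textbf{Step 2 (pathwise monotonicity).} Monotonicity of $c_j$ gives $w_j-c_j(\mathcal{A}^2)\ge w_j-c_j(\mathcal{A}^1)$, so $R^2_j\ge R^1_j$ for each $j$, and summing gives $R^2\ge R^1$ on every path.

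\textbf{Step 3 (expectations).} Taking expectations over $\omega$ preserves the inequality, so $\E[R^2-R^1]\ge0$.

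\textbf{Step 4 (strictness).} This is the main obstacle, because the hypothesis as stated is too weak. A repository needing a band language might still be infeasible because of its other languages, or because $w_j$ falls short even after the cost reduction. I would strengthen the strictness clause as follows. Suppose there is a repository $j$ with $L_j\cap\mathcal{A}^1\ne L_j$, where the missing languages lie in the activation band of Proposition \ref{prop:band}, and such that activating them moves $c_j$ across $w_j$. Suppose further that this event has positive probability. The natural sufficient conditions are a continuous $F_{ik,t}$ with positive density on $[T^D,T^S)$, $B>0$ for those languages, and a cost that drops discretely when all of $L_j$ becomes active (for example $c_j=\infty$ unless $L_j\subseteq\mathcal{A}$).

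Under these conditions, Proposition \ref{prop:band} gives a positive probability $F(T^S)-F(T^D)$ that each band language is activated. Independence, or positive joint probability, across the languages in $L_j$ then gives $\Pr(R^2_j>R^1_j)>0$. Combined with the pathwise inequality $R^2\ge R^1$ from Step 2, this yields $\E[R^2]>\E[R^1]$.
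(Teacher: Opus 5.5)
Your proof is correct and follows essentially the same route as the paper's: a weakly lower Generation-2 entry cost makes each repository's entry indicator weakly larger pathwise, and summing over repositories and taking expectations gives weak expansion, with strictness from a positive-probability band event. Your version is more general, allowing multi-language repositories with cost indexed by the active set rather than the paper's single language $\ell(r)$ and generation-indexed cost $c_r(\ell,g)$. Your strictness argument is also more careful than the paper's, which tacitly assumes that positive mass on band-language repositories yields a positive-probability flip of the entry condition.
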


\begin{proof}
Let repository $r$ require language $\ell(r)$ and deliver opportunity value
$\Omega_{ir,t}$. Write its entry condition as
$\Omega_{ir,t}\ge c_r(\ell(r),g)$ under generation $g$, where
$c_r(\ell,2)\le c_r(\ell,1)$ whenever delegation weakly lowers the activation
threshold for language $\ell$. The repository indicator under Generation 2
therefore weakly exceeds the Generation-1 indicator for every repository
opportunity. Summing over repositories and taking expectations gives weak
expansion; strict expansion follows if the opportunity distribution places
positive mass on repositories whose required language lies in the activation
band. \hfill $\square$
\end{proof}

\section{Classifying the Tools into the Model's Production Modes}
\label{app:tools}

Table \ref{tab:classification} assigns the AI coding tools of Section
\ref{sec:background} to the model's production modes. The
classification criterion is the \emph{locus of execution}, taken from
each tool's own primary-source capability description, not model
quality or release date: a tool is mode $C$ if the developer must read,
adapt, place, and run every piece of output; it is mode $D$ if the tool
itself edits files, runs commands, and iterates on errors. The 2023
proto-agents are listed separately: they had mode-$D$ mechanics but not
the reliability to constitute a real production mode.

\begin{table}[H]
\caption{Classification of AI coding assistance into the model's
production modes. Mode $C$ tools augment execution the developer still
performs herself ($V^C = V^S + \gamma s - r_C$: value proportional to
existing skill $s$). Mode $D$ tools execute on the developer's behalf
($V^D$: value scales with general ability $a$, not language-specific
skill), which is what moves the entry threshold from $T^1$ to $T^D$.}
\label{tab:classification}
\begin{small}
\begin{singlespace}
\begin{description}[leftmargin=0.45cm,labelsep=0.4em,style=nextline]
\item[\emph{Generation 1 --- autocomplete copilots.}]
Deep TabNine (Jul 2019), GitHub Copilot (Jun 2021), Codeium (2022),
Ghostwriter (Oct 2022), CodeWhisperer (Apr 2023), and Cody (2023)
suggest the next lines or a whole function inside the IDE; the developer
accepts, adapts, and runs the output. These are mode-$C$ tools: they
lower $T^C$ only where skill $s$ is high.

\item[\emph{Generation 1 --- conversational assistants.}]
ChatGPT (Nov 2022), Bing Chat (Feb 2023), Claude (Mar 2023), Bard
(Mar 2023), and Copilot Chat (2023) write snippets, explain errors, and
answer questions in a chat window; the developer transfers everything
into her own environment. These are also mode-$C$ tools, with the same
foothold logic as Assumption \ref{ass:foothold}:
$\gamma s-r_C\le0$ for unfamiliar $k$.

\item[\emph{2023 proto-agents.}]
AutoGPT (Mar 2023), BabyAGI (Apr 2023), smol-developer (May 2023),
GPT-Engineer (Jun 2023), and Aider (mid-2023) had mode-$D$ mechanics
(file writes and iteration) on models that could not sustain them; on
SWE-bench, the best model resolved 1.96\% of tasks. Here $D$ exists, but
$z(A)$ is too low: $T^D>T^1$ for almost all $k$.

\item[\emph{Generation 2 --- agentic assistants.}]
Devin (Mar 2024), SWE-agent (Apr 2024), OpenHands (2024), Copilot
Workspace (Apr 2024), Replit Agent (Sep 2024), Cursor agent (Nov 2024),
Windsurf/Cascade (Nov 2024), Cline (2024), Jules (Dec 2024), Copilot
agent mode (Feb 2025), \textbf{Claude Code (Feb 2025)}, Codex CLI/agent
(Apr--May 2025), Gemini CLI (Jun 2025), and Kiro (Jul 2025) inspect the
repository, edit files, run commands and tests, read errors, and iterate
from a natural-language specification. These are high-$z(A)$ mode-$D$
tools: they create $T^D<T^1$ for unfamiliar $k$, the activation band.

\item[\emph{Adjacent app-builder agents.}]
v0 (Oct 2023), Bolt.new (Oct 2024), and Lovable (Nov 2024) turn a
natural-language specification into a deployed web application. They
have mode-$D$ economics but target non-developers, outside the GitHub
developer population studied here.
\end{description}
\end{singlespace}
\end{small}
\end{table}

Two features of the table do analytical work. First, the mode-$C$ rows
were all available to \emph{every} developer well before the treatment
window, so the pre-period counterfactual is genuinely
$\mathcal{M}_1$-with-mature-tools, not ``no AI.'' Second, the mode-$D$
rows other than Claude Code are exactly the set the competing-agent
screen of Section \ref{sec:data} removes: a developer already using
Devin or Cursor's agent before her first Claude commit already had menu
$\mathcal{M}_2$, so her Claude date would mismeasure the menu
expansion.

\section{Event-Study Support and Extended Pre-Trends}
\label{app:support}

The panel covers 28 months, but the displayable event-study window is
asymmetric by design. On the post side, with not-yet-treated controls,
one-month anticipation, and the last cohort adopting in January 2026
(period 25), group-time effects are estimable only through period 23,
so the earliest cohort reaches at most event time $e=+7$; later horizons
have no remaining comparison units. On the pre side, the earliest cohort
adopts in period 16, so all ten cohorts support pre-period diagnostics
back to $e=-12$ and beyond. Table \ref{tab:risksets} reports the number
of cohorts and developers contributing at each event time; the thinning
support at $e\ge 5$ explains the widening confidence intervals at long
horizons in the main figures.

\begin{table}[H]
\caption{Event-time risk sets: cohorts and developers contributing.}
\label{tab:risksets}
\centering
\begin{threeparttable}
\begin{tabular}{rcc}
\toprule
Event time $e$ & Cohorts contributing & Developers \\
\midrule
$-12$ & 10 & 5,346 \\
$-11$ & 10 & 5,346 \\
$-10$ & 10 & 5,346 \\
$-9$ & 10 & 5,346 \\
$-8$ & 10 & 5,346 \\
$-7$ & 10 & 5,346 \\
$-6$ & 10 & 5,346 \\
$-5$ & 10 & 5,346 \\
$-4$ & 10 & 5,346 \\
$-3$ & 10 & 5,346 \\
$-2$ & 10 & 5,346 \\
$-1$ & 9 & 4,456 \\
$0$ & 8 & 3,823 \\
$1$ & 7 & 3,278 \\
$2$ & 6 & 2,813 \\
$3$ & 5 & 2,204 \\
$4$ & 4 & 1,406 \\
$5$ & 3 & 680 \\
$6$ & 2 & 149 \\
$7$ & 1 & 37 \\
\bottomrule
\end{tabular}
\begin{tablenotes}\footnotesize
\item \emph{Notes.} A cohort $g$ contributes to event time $e$ when
$2 \le g+e \le 23$: the lower bound reflects the varying base period,
the upper bound the last period with not-yet-treated comparison units
under one-month anticipation (last cohort adopts in period 25).
Developer counts are cohort sizes in the estimation sample summed over
contributing cohorts.
\end{tablenotes}
\end{threeparttable}
\end{table}

Figure \ref{fig:extended-pretrend} extends the pre-period window of the
event study to a full year for the two headline diversity outcomes. All
eleven pre-anticipation coefficients ($e \in \{-12,\ldots,-2\}$) are
tightly estimated zeros for both outcomes, with every cohort
contributing at every point; the only non-zero pre-adoption coefficient
remains the anticipation month $e=-1$ discussed in Section
\ref{sec:results-pretrends}.

\begin{figure}[H]
\caption{Extended pre-period window ($e \in [-12, 7]$): programming
languages and language entropy.}
\label{fig:extended-pretrend}
\centering
\includegraphics[width=0.85\textwidth]{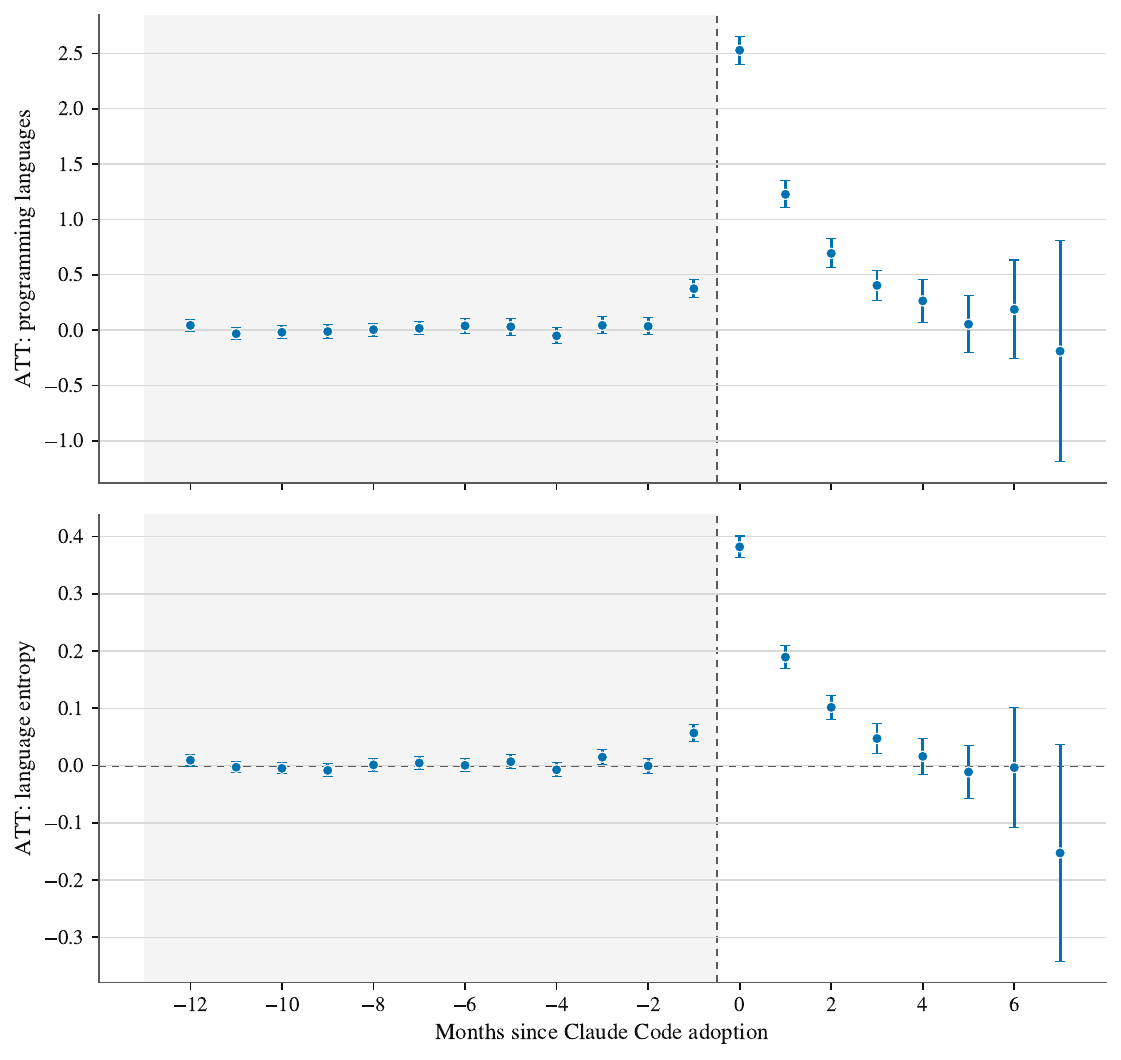}
\end{figure}

\section{Additional Data Construction Details}
\label{app:data}

\subsection*{Bot-account filter}

We exclude developers whose login matches any of the following case-insensitive
substrings, applied after lower-casing the login string: \texttt{[bot]},
\texttt{-bot}, \texttt{\_bot}, \texttt{bot-}, \texttt{dependabot},
\texttt{renovate}, \texttt{github-actions}, \texttt{codecov},
\texttt{greenkeeper}, \texttt{snyk}. The pattern correctly identifies standard
CI/CD bots and dependency managers; we do not attempt to detect bespoke
organizational bots that do not follow these conventions.

\subsection*{Stratification scheme}

The treated sample of 5{,}000 developers is allocated proportionally across the
six adoption-month strata from April through September 2025. Within each month,
the allocation is divided equally across three intensity tiers: \emph{low}
(5--10 Claude commits), \emph{medium} (11--50), and \emph{high} ($\geq51$).
The control sample of 5{,}000 not-yet-treated developers follows the same
stratification scheme, proportional across two adoption-quarter strata
(2025\,Q4 and 2026\,Q1) and equal across the three intensity tiers within each
quarter.

\subsection*{Contribution-API fetch (repository discovery input)}

Each developer's monthly activity is retrieved by querying the GitHub GraphQL
API's \texttt{user.contributionsCollection} endpoint with \texttt{from} and
\texttt{to} parameters set to the first and last calendar dates of each month.
Within each window we retrieve up to 100 repositories
(\texttt{maxRepositories: 100}) and the top 15 languages per repository ordered
by byte size. This fetch serves as one of the three repository-discovery
sources for the commit-level construction below; its language fields are used
only as a cross-check.

\subsection*{Commit-level construction}

The outcome data are built in three stages. \emph{Repository discovery}
unions three sources per developer: the contribution-API repositories
above; the repositories appearing in the raw Claude-commit stream; and
the GH~Archive record of public \texttt{PushEvent}s keyed to the
developer's login, queried on BigQuery over January 2024--April 2026.
Because push events fire for any branch and any authoring email, the
archive source adds 45 percent more (developer, repository) pairs than
the first two sources combined, for a total of 133{,}096 pairs.
\emph{Commit enumeration} lists all commits authored by the developer in
each discovered repository within the window via the REST commits
endpoint, yielding 3{,}151{,}624 distinct commits after de-duplicating
identical commits appearing in both forks and upstream repositories
(each commit is fetched once, preferring the developer's own
repository). \emph{File extraction} obtains each commit's changed file
paths and full commit message from bare, blob-filtered clones of each
repository (\texttt{git clone --bare --filter=blob:none}), reading the
commit graph with \texttt{git log --name-status} across all branches; a
per-SHA REST fallback recovers commits no longer reachable from any
branch after force-pushes. The procedure recovers changed-file records
for 99.997 percent of enumerated commits --- 3{,}151{,}536 commits
touching 57.2 million files. Merge commits are identified by parent
count and excluded from language construction. Each changed file is
assigned a language by GitHub Linguist's filename and extension rules
(exact filenames first, then extensions, ambiguous extensions resolved
to Linguist's popular default), retaining files classified as
\emph{programming} languages and applying Linguist's path-based
vendored, generated, and documentation exclusions. Claude co-authorship
is detected from the \texttt{Co-Authored-By: Claude} trailer in the
commit message; the same message scan detects the competing-agent
signatures listed in Section \ref{sec:data}.

\end{appendices}

\end{document}